\newtheorem{lemma}{Lemma}
\newtheorem{theorem}{Theorem}
\newtheorem{remark}{Remark}
\newtheorem{definition}{Definition}
\newcommand{\be}{\begin{eqnarray}}
\newcommand{\ee}{\end{eqnarray}}
\newcommand{\bee}{\begin{eqnarray*}}
\newcommand{\eee}{\end{eqnarray*}}
\newcommand{\R}{{\mathbb R}}
\newcommand{\N}{{\mathbb N}}
\newcommand{\Z}{{\mathbb Z}}
\newcommand{\W}{{W}}
\newcommand{\Hh}{{H}}
\newcommand{\I}{\mathds 1} 
\newcommand{\asy}{{\mathcal O}}
\newcommand{\aasy}{\tilde {\mathcal O}}
\newcommand{\da}{d_A}
\newcommand{\Vector}[1]{{\mathbf {#1}}}
\newcommand{\hhbar}{h}
\newcommand{\Chi}{\xi}
\newcommand{\w}{u}
\newcommand{\UU}{w}
\newcommand{\uu}{\Theta}
\newcommand{\rev}{}
\begin{document}

\title [NLS Stark-Wannier] {Nonlinear Stark-Wannier equation}

\author {Andrea SACCHETTI}

\address {Department of Physics, Informatics and Mathematics, University of Modena e Reggio Emilia, Modena, Italy.}

\email {andrea.sacchetti@unimore.it}

\date {\today}

\thanks {This paper is partially supported by GNFM-INdAM. \ I deeply thank R. Fukuizumi for useful discussions about 
nonlinear Schr\"odinger equations.}

\begin {abstract} 
In this paper we consider stationary solutions to the nonlinear one-dimensional Schr\"odinger equation with a periodic potential and a Stark-type 
perturbation. \ In the limit of large periodic potential the Stark-Wannier ladders of the linear equation become a 
dense energy spectrum because a cascade of bifurcations of stationary solutions occurs when the ratio between the effective nonlinearity strength and 
the tilt of the external field increases.

\bigskip

{\it Ams classification (MSC 2010):} 35Q55, 81Qxx, 81T25. 

\end{abstract}

\maketitle

\section {Introduction} \label {Sec0}

The dynamics of a quantum particle in a periodic potential under an homogeneous external field is one of 
the most important problems in solid-state physics and, more recently, in the theory of Bose Einstein Condensates (BECs). \ Because 
of the periodicity of the potential, it is expected the existence of 
families of stationary (metastable) states with associated energies displaced on regular ladders, the so-called 
Stark-Wannier ladders \cite {GKK,WS,S}, and the wavefunction would perform Bloch oscillations. 

Quantum dynamics becomes more interesting when we take into account the interaction among particles. \ In fact, in the framework of BECs 
accelerated ultracold atoms moving in an optical lattice 
\cite {Bloch1,Bloch2,RSN,SPSSKP,Shin} has opened the field to multiple applications, as well as the measurement 
of the value of the gravity acceleration $g$ using ultracold Strontium atoms confined in a vertical 
optical lattice \cite {FPST,PWTAPT}, direct measurement of the universal Newton gravitation constant 
$G$ \cite {RSCPT} and of the gravity-field curvature \cite {RCSMPT}. 

Motivated by such physical applications we study, as a model for a confined accelerated BECs in a periodic optical lattice under 
the effect of the gravitational force, the nonlinear one-dimensional time-dependent Schr\"odinger equation with a cubic nonlinearity, a  
periodic potential $V$ and an accelerating Stark-type potential $W$ 
\be
i \hbar \frac {\partial \psi}{\partial t} = - \frac {\hbar^2}{2m} \frac {\partial^2 \psi}{\partial x^2}  + 
\frac {1}{\epsilon} V \psi + \alpha_1 \W \psi + \alpha_2 |\psi |^{2 } \psi \, ,  \label {F1}
\ee
in the limit of large periodic potential, i.e. $0< \epsilon \ll 1$; that is equation (\ref {F1}) is the so called Gross-Pitaevskii equation. \ Here, $\hbar$ is 
the Planck's constant, $m$ is the mass of the atom and 
$\alpha_2$ is the strength of the nonlinearity term; the real valued parameters $m$, $\hbar$, $\alpha_1$ and 
$\alpha_2$ are assumed to be fixed. \ In particular $W(x)$ is a Stark-type potential with strength $\alpha_1$, 
that is it is locally a linear function: $W(x)=x$ for any $x$ belonging to a 
fixed interval large enough.  

We name equation (\ref {F1}) \emph {nonlinear Wannier-Stark equation}. \ The well-known \emph {Wannier-Stark equation}, where $\alpha_2 =0$, has been extensively 
studied since the papers by Bloch \cite {BlochF} and Zener \cite {Zener}. \ Assuming that the periodic potential $V$ is regular enough, 
then the spectrum of the associated operator covers the whole real axis. \ On the other side, if we neglect the coupling term between different bands, then it turns out 
that the spectrum of such a decoupled band approximation consists of a sequence on infinite ladders of real eigenvalues \cite {W1,W2}. \ The crucial point is to understand 
what happen to these eigenvalues when we restore the interband coupling term \cite {W3,Z1,Z2}. \ This question has been largely debated and it has been proved that these
ladders of real eigenvalues will turn into ladders of quantum resonances, the so-called Wannier-Stark resonances (see \cite {S} and the references therein). \ Analysis of 
the nonlinear Wannier-Stark equation, where $\alpha_2 \not= 0$, is a completely open problem and it is motivated by recent experiments of BECs in accelerating optical lattices.

By means of a simple recasting we swap the limit of large potential $\epsilon \ll 1$ to a semiclassical equation (see eq. (\ref {F2}) below) where the strength of 
the Stark-type potential and the nonlinearity strength will depend on a semiclassical parameter $\hhbar$. \ In the semiclassical limit of $\hhbar \to 0$ we will 
show that the time-independent nonlinear Schr\"odinger equation may be approximated by means of a discrete time-independent nonlinear 
Schr\"odinger equation which stationary solutions may be explicitly calculated. \ In particular, a cascade of bifurcations occurs 
when the ratio between the nonlinearity strength and the strength of the Stark-type potential increases; in the opposite situation, 
that is when this ratio goes to zero, we recover a local Wannier-Stark ladders picture.

Existence and computation of stationary solutions to equation (\ref {F1}) has been already considered by \cite {FS2,PSM,PS} when $\alpha_1 =0$; in these 
papers the authors reduce the problem of the existence and calculation of stationary solutions to the one related to a 
discrete nonlinear Schr\"odinger equation. \ In this latter problem has been observed by \cite {ABK} that stationary solutions may bifurcate 
when some parameters of the model assume critical values. \ Here, we extend such analysis to the case where 
an external Stark-type potential is present, that is when $\alpha_1 \not= 0$. \ To this end we must 
introduce some technical assumptions on $W$, that is $W$ must be a locally linear bounded function with compact support; in fact in the case of 
a \emph {true} Stark potential where $W(x)=x$ some basic estimates useful in our analysis don't work because $W$ is not a bounded operator. \ Some results, like the occurrence of a cascade of bifurcations for the discrete nonlinear 
Schr\"odinger equation in the anticontinuous limit has been already announced in a physics-oriented paper \cite {S3} without mathematical details. \ We should also mention 
a recent paper \cite {GP} where bifurcations are observed in rotating Bose-Einstein condensates.

The paper is organized as follows: in \S \ref {Sec2} we introduce the model and we state our assumptions; in \S \ref {A} 
we recall some technical results obtained by \cite {FS2}; in \S \ref {Sec3} we derive 
the discrete nonlinear Schr\"odinger Wannier-Stark equation; in \S \ref {Sec4} we compute the finite-mode stationary solutions of the 
discrete nonlinear Schr\"odinger Wannier-Stark equation in the anticontinuous limit, it turns out that a bifurcation tree picture 
occurs; in \S \ref {Sec5} we prove the stability of these stationary solutions when we recover the discrete nonlinear Schr\"odinger 
Wannier-Stark equation; finally, in \S \ref {Sec6}-\ref {Sec7}  we prove that stationary solutions to the complete equation (\ref {F5}) can be 
approximated by means of the finite-mode solutions derived in \S \ref {Sec4}.  

\subsection* {Notation} By $\ell^p_{\R}$ we denote the space of vectors $\Vector {c} = \{ c_n \}_{n\in \Z} \in \ell^p (\Z)$ such that 
$c_n \in \R$ are real valued. \ Similarly, 
\bee
L^p_{\R} = \left \{ \psi \in L^p \ : \ \psi \ \mbox {is a real valued function} \right \}.
\eee

Let $f$ and $g$ two vectors belonging to a normed space with norm $\| \cdot \|$, and depending on the semiclassical parameter 
$\hhbar$. \ By the notation $f = g + \aasy \left ( e^{-S_0/\hhbar} \right )$, as $\hhbar \to 0$, we mean that for any 
$\rho \in (0, S_0)$ there exist a positive constant $C:= C_\rho >0$ (independent of $\hhbar$) such that 
\bee
\| f-g \| \le C e^{-(S_0-\rho )/\hhbar} \, , \  \forall \hhbar \in (0,\hhbar^\star )\, ,
\eee
for some $\hhbar^\star >0$. \ By the notation $f \sim g$, as $\hhbar \to 0$, we mean that $\lim_{\hhbar \to 0^+} \frac {f}{g} = C$ 
for some $C \in (0, +\infty )$. \ By the notation $f = \asy (\hhbar^q )$, as $\hhbar \to 0$, we mean that there exists $\hhbar^\star >0$ and a 
positive constant $C$ independent of $\hhbar$ such that $|f |\le C \hhbar^q$ for any $\hhbar \in (0, \hhbar^\star )$.

By $C$ we denote a generic positive constant independent of $\hhbar$ whose value may change from line to line.

\section {Description of the model and assumptions} \label {Sec2}

Here we consider the nonlinear Schr\"odinger equation (\ref {F1}) where the following assumptions hold true.

{\bf Hyp.1} {\it $V(x)$ is a smooth, real-valued, periodic and non negative function with period $a$, i.e. 
\bee
V(x) = V(x+a) \, , \ \forall x \in \R \, , 
\eee
and with minimum point $x_0 \in \left [ - \frac 12 a ,+\frac 12 a \right )$ such that 
\bee
V(x) > V(x_0) \, , \ \forall x \in \left [ - \frac 12 a ,+\frac 12 a \right ) \setminus \{ x_0 \} \, . 
\eee
For argument's sake we assume that $V(x_0 )=0$ and $x_0=0$.}

In the following let us denote by $x_n = x_0 + n a$.

\begin {remark}
In physical experiments \cite {FPST,PWTAPT} on accelerated BECs in optical lattices the periodic potential has the form 
$V(x) = V_0 \sin^2 (k_L x)$ for some $V_0 ,\ k_L >0$; hence $V(x)$ has a unique minimum point $x_0 =0$ in the interval 
$ \left [ - \frac {\pi}{2 k_L} ,  +  \frac {\pi}{2 k_L} \right )$. \ However, we could, in principle, adapt our treatment 
to a more general case where $V(x)$ has more than one absolute minimum point in such interval. 
\end {remark}

{\bf Hyp.2} {\it $\W(x)$ is a smooth real-valued function such that
\bee
\W (x) = x \ \mbox { if } \ |x|\le N a \, , 
\eee
for some $N \in \N$. \ Furthermore $W$ has compact support $\Omega \supset [-N a, N a].$}

\begin {remark}
We require that $W(x)$ is a bounded function with compact support for technical reasons; indeed, this assumption will play a crucial role 
in order to prove the results given in \S \ref {Sec5}, \ref {Sec6}, \ref {Sec7}. \ However, in practical experiments \cite {FPST,PWTAPT} 
on accelerated BECs in optical lattices it is expected that BECs perform Bloch oscillations in a finite region; hence, a model where the external field $W$ 
has a compact support and it is locally linear in the finite region where Bloch oscillations occur would fit the physical device.
\end {remark}

By recasting 
\be
F = \epsilon \alpha_1 \, , \ h = \hbar \sqrt {\epsilon /2m}  \, , \ \tau = t /\sqrt {\epsilon /2m} \ \mbox { and } \ 
\eta = \epsilon \alpha_2 \label {res}
\ee
then the above equation takes the form
\be
i \hhbar \frac {\partial \psi}{\partial \tau} = - {\hhbar^2} \frac {\partial^2 \psi}{\partial x^2}  +  V \psi + F \W \psi + 
\eta |\psi |^{2} \psi \label {F2}
\ee
and the limit of large periodic potential $\epsilon \to 0^+$ is equivalent to the semiclassical limit $\hhbar \to 0^+$ where
\be
\eta \sim F \sim \hhbar^2 \ \mbox { as } \ \hhbar \ \mbox {goes to zero}. \label {F3}
\ee

We recall here some results by \cite {C1,C2,C3} concerning the solution to the time-dependent nonlinear Schr\"odinger equation 
(\ref {F2}). \ Let $H_B$ be the Bloch operator formally defined on $L^2 (\R , dx)$ as
\be
H_B:=  - {\hhbar^2}\frac {d^2}{dx^2} + V \, . \label {F4}
\ee
For any $N \in \N $, $N>0$, the linear operator $\Hh $, formally defined as 
\bee
\Hh = H_B + F \W 
\eee
on the Hilbert space $L^2 (\R , dx)$, admits a self-adjoint extension, still denoted by $\Hh$. \ The following 
estimate 
hold true (see Proposition 2.1 by \cite {C3}): let $(q,r)$ be an admissible pair $\frac {2}{q} = \frac 12 - \frac 1r$ with 
$2\le q,r \le +\infty$. \ Let $T>0$, then there exists $C:=C(q,T,\hhbar)$ such that 
\bee
\left \| e^{-i \tau \Hh/\hhbar} \psi \right \|_{L^q ([-T,T]; L^r (\R ))} \le C \| \psi \|_{L^2 (\R)} \, , \ 
\forall \psi \in L^2 (\R ) \, . 
\eee
In order to discuss the local and global existence of solutions to (\ref {F2}) \cite {C3} \rev {introduced the following set in a more general situation where 
the potential is not bounded} 
\bee
\Sigma = \left \{ \psi \in H^1 (\R ) \ : \ \| \psi \|_{\Sigma } :=  \| \psi \|_{H^1 (\R )} + \| (V+FW) \psi \|_{L^2 (\R )} < \infty \right \} 
\, .
\eee
%
Then (see Theorem 4.2 by \cite {C3}), if $\psi_0 \in \Sigma$ there exists a unique solution $\psi 
\in C([-T,T]; \Sigma)$ to (\ref {F2}) with initial datum $\psi_0$, such that 
\bee
\psi ,  \psi \partial_x (V+FW), \partial_x \psi \in L^{8} ([-T,T];L^{4} (\R )) \, , 
\eee
for some $T>0$ depending on $\| \psi_0 \|_\Sigma$. \ \rev {We must underline that in our case $\Sigma \equiv H^1 (\R )$ because $V$ and $W$ are 
bounded functions.}

In fact, this solution is global in time for any $\eta \in \R$ because $1 < 2/d$, where $d=1$ is the spatial dimension, and (\ref {F2}) 
enjoys the conservation of the mass
\bee
\| \psi (\cdot , \tau )\|_{L^2 (\R)} = \| \psi_0 (\cdot )\|_{L^2 (\R)}
\eee
and of the energy
\bee
{\mathcal E} (\psi ) = {\mathcal E} (\psi_0 )
\eee
where
\bee
{\mathcal E} (\psi ) &:=& \langle \Hh \psi , \psi \rangle + \frac {\eta}{2} \| \psi 
\|_{L^{4}}^{4} \\ 
&=& {\hhbar^2} \| \partial_x \psi \|_{L^2 (\R)}^2 + \langle V \psi , \psi \rangle + 
F \langle \W \psi , \psi \rangle + \frac {\eta}{2} \| \psi \|_{L^{4}}^{4}
\eee
We may remark that such results hold true even when the Stark-type potential is replaced by an actual Stark potential, i.e. $W(x) \equiv x$. \ In such a 
case $\Sigma \subset H^1 (\R )$.

Here, we look for stationary solutions to equation (\ref {F2}) of the form 
\bee
\psi (x,\tau ) = e^{-i\lambda \tau /\hhbar} \psi (x) 
\eee
for some \emph {energy} $\lambda \in \R$ and wave function $\psi (x)$. \ Hence, equation (\ref {F2}) takes the form
\be
\Hh  \psi + \eta |\psi |^{2 } \psi = \lambda \psi \, .  \label {F5}
\ee

\begin {remark} \label {Rem1}
We must underline that when a stationary solution $\psi$ to equation (\ref {F5}) is regular enough then $\psi$ is, up to a phase factor, a real-valued function 
(see Lemma 3.7 by \cite {P} adapted to (\ref {F5})). \ Hence, equation 
(\ref {F5}) can be replaced by the following equation
\be
\Hh \psi + \eta \psi^{3} = \lambda \psi \, .  \label {F6}
\ee
where $\psi$ is real-valued.
\end {remark}

Our aim is to look for real-valued stationary solutions $\psi \in H^1$ to (\ref {F6}) with associated energy $\lambda \in \R$.

\begin {remark} \label {Rem2}
Let $\left ( T_a \psi \right ) (x) = \psi (x - a)$ be the translation operator. \ Since $[H_B , T_a] = 0 $ and 
$[Fx,T_a]=F a$ then the stationary solutions to (\ref {F6}) when $W$ is a Stark potential, i.e. $W (x) \equiv x$, have associated energies $\lambda$ 
displaced on regular ladders; that is, if $\psi (x) $ is a solution to (\ref {F6}) associated with $\lambda$, then $\psi (x-a)$ is a 
solution to the same equation associated with $\lambda + Fa$. \ From this fact we expect that, under 
some circumstances, the dominant term of the energies $\lambda$ associated to stationary solutions to (\ref {F6}) are displaced on ladders for 
some range of values of $\lambda$, even when $W(x)$ is a Stark-type potential satisfying Hyp.2.
\end {remark}

\section {Preliminary results. \ Bloch functions in the semiclassical limit} \label {A}

\subsection {Bloch Decomposition and Wannier functions} \label {A.1}

Here, we briefly resume some known results by \cite {C,RS} concerning the spectral properties of the self-adjoint 
realization, still denoted by $H_B$, of the Bloch operator formally defined on $L^2 (\R , dx)$ as (\ref {F4}). \ Its spectrum 
is given by bands. \ Let ${\mathcal B} = \R / b\Z = \left ( - \frac 12 b, + \frac 12 b \right ]$, where $b= \frac {2\pi}{a}$ and $a$ 
is the period of the periodic potential $V$, be the Brillouin zone, the elements of the Brillouin zone are denoted by $k$ and they 
are usually named quasi-momentum (or crystal momentum) variable.

Let $\varphi_l (x,k)$ denote the Bloch functions associated to the band functions $E_l (k)$, $l\in \N$. \ Here, we 
collect some basic properties about the Bloch and band functions. \ The band and Bloch functions satisfy to the following 
eigenvalues problem 
\be
H_B \varphi = E \varphi \label {F50}
\ee
with quasi-periodic boundary conditions 
\bee
\varphi (a,k) =e^{i ka} \varphi (0,k) \ \mbox { and } \ \frac {\partial \varphi (a,k)}{\partial x} =e^{i ka} 
\frac {\partial \varphi (0,k)}{\partial x} \, . 
\eee
The Bloch functions $\varphi_l$ may be written as
\bee
\varphi_l (x,k) = e^{i k x} \uu_l (x,k) 
\eee
where $\uu_l (x,k)$ is a periodic function with respect to $x$: $\uu_l (x+a , k) = \uu_l (x,k)$. \ For any fixed $k \in 
{\mathcal B}$ the spectral problem (\ref {F50}) has a sequence of real eigenvalues
\bee
E_1 (k) \le E_2 (k) \le \cdots \le E_l (k) \le \cdots \, ,
\eee
such that $\lim_{l\to \infty} E_l (k) = + \infty$. \ As functions on $k$, both Bloch and band functions are periodic 
with respect to $k$:
\bee
E_l (k) =E_l (k+b) \ \mbox { and } \ \varphi_l (x,k)=\varphi_l (x,k+b) \, ,
\eee
and they satisfy to the following properties for any real-valued $k$:
\bee
\varphi_l (x,-k)= \overline {\varphi_l (x,k)} \ \ \mbox { and } \ \ {E}_l (-k) = {E}_l (k) \, . 
\eee
Furthermore, if $V(x)$ is an even potential, i.e. $V(-x)=V(x)$, then $\varphi_l (-x,k)= \overline 
{\varphi_l (x,k)}$, $\varphi_l (x,0)$ are even functions while $\varphi_l (x, b/2)$ are odd functions. \ The band 
functions $E_l (k)$ are monotone increasing (resp. decreasing) functions for any $k \in \left [ 0 , 
\frac 12 b \right ]$ if the index $l$ is an odd (resp. even) natural number. \ The spectrum of $H_B$ is purely 
absolutely continuous and it is given by bands:
\bee
\sigma (H_B) = \cup_{l\in \N} [E^b_l , E^t_l] \ \ \mbox { where } \ \ [E^b_l , E^t_l] = \{ E_l (k) ,\ 
k \in {\mathcal B} \} \, . 
\eee
In particular we have that 
\bee
E^b_l = 
\left \{
\begin {array}{ll}
E_l (0) & \ \mbox { for odd } l \\ 
E_l (b/2) & \ \mbox { for even } l
\end {array}
\right. \ \mbox { and } \ 
E^t_l = 
\left \{
\begin {array}{ll}
E_l (b/2) & \ \mbox { for odd } l \\ 
E_l (0) & \ \mbox { for even } l
\end {array}
\right. \, . 
\eee
The intervals $(E^t_l, E^b_{l+1})$ are named gaps; a gap $(E^t_l, E^b_{l+1})$ may be empty, that is $E^b_{n+1}=
E^t_l$, or not. \ It is well known that, in the case of one-dimensional crystals, all the gaps are empty if, and 
only if, the periodic potential is a constant function. \ Because we assume that the periodic potential is not a 
constant function then one gap, at least, is not empty. \ In particular when $\hhbar$ is small enough then we have 
that the following asymptotic behavior \cite {H,WK1,WK2}
\be
\frac {1}{C} \hhbar \le E_2^b - E_1^t \le C \hhbar \label {F51}
\ee
holds true for some $C>0$; hence,the first gap between $E_1^t$ and $E_2^b$ is not empty in the semiclassical limit. \ Furthermore, 
the first band turns out to be exponentially small, i.e. 
\be
E_1^t -E_1^b = \aasy ( e^{-C/\hhbar } ) \mbox { for some }C>0; \label {F52}
\ee
in (\ref {F55}) we will give an expression for such a constant $C$.

The Bloch functions are assumed to be normalized to $1$ on the interval $[0,a]$: 
\bee
\frac {2\pi}{a} \int_0^a \overline {\varphi_j (x,k)} \varphi_l (x,k)  d x =\delta_j^l \, ,
\eee
where $\delta_j^l =1$ when $j =l$ and $\delta_j^l =0$ when $j \not= l$ (see Eq. (4.1.8) by 
\cite {C}). \ Furthermore, the Bloch functions are such that (see Eq. (4.1.6a) by \cite {C})
\bee
\int_{\R} \overline {\varphi_j (x,k)} \varphi_l (x,q) dx = \delta_j^l \delta (k-q) 
\eee
and (see Eq. (4.1.10) by \cite {C})
\bee
\sum_{l\in \N} \int_{\mathcal B} \overline {\varphi_m (x,k)} \varphi_l (x',k) dk = \delta (x-x') \, , 
\eee
where $\delta (\cdot )$ denotes the Dirac's $\delta$. \ From the Bloch decomposition formula it follows that any 
vector $\psi \in L^2$ can be written as (see Eq. (5.1.5) by \cite {C} or Theorem XIII.98 by \cite {RS})
\bee
\psi (x) = \sum_{l \in \N}  \int_{{\mathcal B}} \varphi_l (x,k ) \phi_l (k ) d k  \, . 
\eee
The family of functions $\{ \phi_l (k ) \}_{l\in \N}$ is called the crystal momentum 
representation of the wave function $\psi$  and it is defined as
\bee
\phi_l (k ) = \int_{\R} \overline {\varphi_l (x, k )} \psi (x)  d x \, . 
\eee
By construction any function $\phi_l (k)$ is a periodic function and the transformation 
\be
\psi \in L^2 (\R , dx) \to {\mathcal U} \psi := \{ \phi_l \}_{l \in \N} \in {\mathcal H} := 
\otimes_{l\in \N} L^2 ({\mathcal B} , d k ) \label {F53}
\ee
is unitary:
\bee
\| \psi \|_{L^2 (\R , dx )}^2 = \sum_{l\in \N} \| \phi_l \|_{L^2 ({\mathcal B}, dk )}^2\, .
\eee
Let $W_l(x)$ be the \emph {basic} Wannier function associated to the $l$-th band, that is 
\bee
W_l (x) = \sqrt {\frac {a}{2\pi}} \int_{\mathcal B} \varphi_l (x,k) dk \, . 
\eee
We define a family of Wannier functions $\{ W_{l,n} (x) \}_{l\in \N , n \in \Z}$ as 
\bee
W_{l,n} (x) = W_l (x- n a) = \sqrt {\frac {a}{2\pi}} \int_{\mathcal B} \varphi_l (x,k) e^{-i n a k} dk  \, .
\eee
Basically, in the semiclassical limit of $\hhbar$ small, the Wannier function $W_{l,n}$ is localized on the $n$-th 
well, that is in a neighborhood of $x_n$. \ The following properties hold true
\bee
\int_{\R} \overline {W_{l,n} (x)} W_{m, n} (x) d x = \delta_l^m \, , \ \sum_{l\in \N \, , \ n \in \Z} 
\overline {W_{l,n} (x)} W_{l,n} (x') = \delta (x-x') 
\eee
and we have the following relation between the Wannier and the Bloch functions:
\bee
\varphi_l (x,k) = \sqrt {\frac {a}{2\pi}} \sum_{n\in \Z} e^{i n a x} W_{l,n} (x)\, . 
\eee
If we set 
\bee
c_l^n = \int_{\R}  \overline {W_{l,n} (x) } \psi (x) dx 
\eee
then we may represent a wave function $\psi$ as 
\bee
\psi \in L^2 \to {\mathcal W} \psi = \{ c_l^n\}_{l\in \N ,\ n \in \Z} \in \ell^2 (\N \times \Z )
\eee
Such a transformation ${\mathcal W}$ is unitary 
\bee
\| \psi \|_{L^2}^2 = \sum_{l\in \N \, , \ n \in \Z}  | c_l^n |^2 
\eee
with inverse
\be
\psi (x)= \sum_{l\in \N \, , \ n \in \Z}  c_l^n  W_{l,n} (x) \, . \label {F54}
\ee

\begin {remark}
The standard ``tight binding'' model is obtained by substituting (\ref {F54}) in (\ref {F2}), and it reduces 
(\ref {F2}) to a discrete nonlinear Schr\"odinger equation. \ In fact, in order to improve the estimate of the 
remainder terms of the discrete nonlinear Schr\"odinger equation we decompose the wave 
function $\psi (x)$ on a different base where the vectors of such a base are obtained by means of the single 
well semiclassical approximation described in \S \ref {A.2}.
\end {remark}

\subsection {Semiclassical construction} \label {A.2} Here we restrict our attention to just one band, say 
the first one $[E_1^b , E_1^t ]$. \ By assuming $\hhbar $ small enough then the gap between the first band 
and the remainder of the spectrum is open, see equation (\ref {F51}). \ Let $\Pi$ be the spectral projection of 
$H_B$ on the first band; by \cite {Car} we can find a ``good'' orthonormal basis $\{ \w_n \}_{n\in \Z}$ of 
$\Pi \left [ L^2 (\R) \right ]$. 

In one dimension let 
\bee
d_A (x,y) = \int_x^y \sqrt {V(q)} dq 
\eee
be the Agmon distance between $x$ and $y$ (associated to the energy level corresponding to the minimum value $V (x_0 )=0$ 
of the potential $V(x)$) and let
\be
S_0 = d_A (x_n , x_{n+1}) \label {F55}
\ee
be the Agmon distance between two adjacent wells; by periodicity of the potential $V(x)$ then $S_0$ is independent of the 
index $n$. 
%

Here we summarize some important properties of $\{ \w_n \}_{n\in \Z}$ (see \cite {Car} and Appendix A by \cite {FS2}). \ Let 
$\tilde V$ be the ``single well potential'' obtained by filling all the well, but one; 
that is $\tilde V(x) = V(x) + \theta (x)$ where $\theta (x)$ is a smooth and non negative function such that 
$\theta (x) =0$ in a small neighborhood $(x_0 - \delta ,x_0 + \delta )$ of $x_0$ and $\theta (x) > \varepsilon$ for 
any $x \notin (x_0 - 2\delta ,x_0 + 2\delta )$ for some $\varepsilon >0$ and $0<\delta < \frac 14 a$ is fixed. \ Then 
the operator $\tilde H = -  {\hhbar^2} \frac {d^2}{dx^2} + \tilde V$ has discrete spectrum in the interval 
$[0,\varepsilon]$ and we call such eigenvalues single well states. \ We denote by $\Lambda_1$ the first one, the so 
called ``single well ground state'', and by $\UU_0(x)$ the associated eigenvector. \ 

\begin {remark} \label {Rem16}
By means of semiclassical arguments it follows that \cite {H,WK1,WK2}
\bee
\mbox {dist} \left ( \Lambda_1 , [E_1^b, E_1^t] \right ) = \aasy \left ( e^{-S_0/\hhbar} \right ) \, . 
\eee
Furthermore,
\bee
E^b_2 - \Lambda_1 \ge C \hhbar  
\eee
for some $C>0$.
\end {remark}

If we denote $\UU_n (x) = \UU_0 (x-na)$ then the family $\{ \UU_n \}_{n\in \Z}$ is a family of linearly independent vectors localized on 
the $n-$th well. \ Then, taking their projection $\Pi \UU_n$ on $\Pi \left [ L^2 (\R ) \right ]$ and orthonormalizing the obtained 
family we finally get the base $\{ \w_n \}_{n\in \Z}$ of $\Pi \left [ L^2 (\R ) \right ]$. 

\begin {lemma} \label {Lemma8} The vectors $\w_n$ of the orthonormal base of $\Pi [L^2 (\R )]$ are such that:
\begin {itemize}
\item [i.] 
The matrix with real-valued elements $\langle \w_m , H_B \w_n \rangle$ can be written as 
\bee
(\langle \w_m , H_B \w_n \rangle ) = \Lambda_1 \I - \beta {\mathcal T} + \tilde D, 
\eee
where ${\mathcal T}$ is the tridiagonal Toeplitz matrix, i.e., 
\bee
({\mathcal T})_{m,n} = 
\left \{
\begin {array}{ll}
0 & \ { if }\ |m-n| \not= 1 \\
1 & \ { if }\ |m-n| = 1
\end {array}
\right. \, ,
\eee
$\beta >0$ is such that for any $\rho >0$ then 
\be
\frac 1C e^{-(S_0+\rho )/\hhbar} \le \beta \le C e^{-(S_0-\rho )/\hhbar}  \label {F55Bis}
\ee
for some positive constant $C:=C_\rho >0$, and the remainder term $\tilde D$ is a bounded linear operator from 
$\ell^p (\Z)$ to $\ell^p (\Z)$ with bound 
\be
\| \tilde D \|_{{\mathcal L} (\ell^p  \to \ell^p )} \le C e^{-(S_0 + \zeta )/\hhbar }\, , \ p \in 
[1,+\infty ] \, , \label {F56}
\ee
for some positive constant $\zeta >0$ independent of $\hhbar$ and $p$, and for some positive constant $C$ 
which depends only on $p$. 

\item [ii.] Let $T_a$ be the translation operator $\left ( T_a \psi \right ) (x) = \psi (x-a)$, where $a$ is the 
period of $V$. \ Then, $\w_n =T_a^n \w_0$. 

\item [iii.] All the functions $\w_n$ can be chosen to be real-valued by means of a suitable gauge choice. 

\item [iv.] For any $\rho' , \ \rho'' >0$ and for some positive constant $C>0$ independent on the indexes $n$ 
and $m$, we have that 
\bee
 \| \w_m \w_n \|_{L^1} \le C e^{-[(S_0-\rho' )|m-n|- \rho'']/\hhbar } \, , \ m\not= n \, .
\eee

\item [v.] There exists a constant $C>0$ independent of $\hhbar$ such that 
\bee
\left \| \sum_{n\in \Z} |\w_n| \right \|_{L^\infty} \le C \hhbar^{-1/2} \, . 
\eee

\item[vi.] For any $p\in [2,\infty]$, $\|\w_n\|_{L^p} \le C \hhbar^{-{(p-2)}/{4p}}$, and $\left \| \frac {d \w_n}{dx} \right \|_{L^2} 
\le C \hhbar^{-{1}/{2}},$ where the constants $C>0$ are independent of $\hhbar$ and $n$.

\end {itemize}

\end {lemma}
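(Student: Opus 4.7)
The plan is to construct $\{\w_n\}$ by orthonormalizing the projected family $\{\Pi \UU_n\}$, where $\UU_n = T_a^n \UU_0$ and $\UU_0$ is the real, positive ground state of the single-well operator $\tilde H = -\hhbar^2 d^2/dx^2 + \tilde V$ with eigenvalue $\Lambda_1$. Writing $S_{mn} = \langle \Pi \UU_m, \Pi \UU_n\rangle$ for the overlap matrix, the orthonormalization reads $\w = \Pi \UU \, S^{-1/2}$. Items (ii) and (iii) are immediate from this structure: since $\Pi$ commutes with $T_a$ (the first band is $T_a$-invariant because $V$ is $a$-periodic), one has $\Pi \UU_n = T_a^n \Pi \UU_0$, so $S$ is Toeplitz; hence $S^{-1/2}$ is Toeplitz and $\w_n = T_a^n \w_0$. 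Reality is preserved at every step, because $\UU_0$ is the real positive ground state of a real Schr\"odinger operator, $\Pi$ is defined via a real spectral contour, and $S^{-1/2}$ has real entries.

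For (i), I would compute the matrix entries in three strata. The nearest-neighbor hopping $-\beta = \langle \w_n, H_B \w_{n\pm 1}\rangle$ is the classical tunneling integral: after integration by parts it reduces to a Wronskian-type boundary term at the midpoint between two adjacent wells, yielding the two-sided bound (\ref{F55Bis}) via the standard Helffer--Sj\"ostrand argument (cf.\ \cite{Car} and Appendix A of \cite{FS2}). For $|m-n|\ge 2$, two applications of the Agmon triangle inequality $\da(x_m,x_n) = |m-n| S_0$ give $\langle \UU_m, H_B \UU_n\rangle = \asy(e^{-2 S_0/\hhbar})$, strictly better than $\beta$; this excess survives the change of basis because $(S^{-1/2})_{mn}$ for $|m-n|\ge 2$ inherits the same rate through a Neumann expansion of $S = I + R$ whose distance-$d$ Toeplitz entry is $O(e^{-d(S_0-\rho)/\hhbar})$. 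The diagonal correction $\langle \w_0, H_B \w_0\rangle - \Lambda_1$ is the most delicate piece; one uses $H_B \UU_0 = \Lambda_1 \UU_0 - \theta \UU_0$, the Agmon decay of $\UU_0$ on the support of $\theta$, and the fact that $\w_0$ differs from $\Pi\UU_0/\|\Pi\UU_0\|$ only by exponentially small Gram--Schmidt tails. A Schur test on rows and columns of $\tilde D$ then yields (\ref{F56}) uniformly in $p \in [1,\infty]$.

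Parts (iv)--(vi) all flow from the pointwise Agmon estimate $|\w_n(x)| \le C \hhbar^{-1/4} e^{-(1-\rho)\da(x,x_n)/\hhbar}$, inherited from $\UU_0$ up to exponentially small projection/orthonormalization corrections. For (iv), combining with $\da(x,x_m) + \da(x,x_n) \ge |m-n| S_0$ and integrating gives the claim, with the $\hhbar^{-1/2}$ prefactor absorbed into the $\rho''$ constant (since $\hhbar^{-1/2} \le e^{\rho''/\hhbar}$ for $\hhbar$ small). For (v), at any $x$ only the nearest few wells contribute significantly to $\sum_n e^{-(1-\rho)\da(x,x_n)/\hhbar}$ (geometric tail), so the supremum is bounded by $C \hhbar^{-1/4}$ and a fortiori by $C \hhbar^{-1/2}$. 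Part (vi) is interpolation: $\|\w_n\|_2 = 1$ and $\|\w_n\|_\infty \le C \hhbar^{-1/4}$ (from the harmonic-oscillator concentration of $\UU_0$) give $\|\w_n\|_p \le C \hhbar^{-(p-2)/(4p)}$ by log-convexity, while $\hhbar^2 \|\w_n'\|_{L^2}^2 \le \langle \w_n, H_B \w_n\rangle = \Lambda_1 + o(1) = \asy(\hhbar)$ (using $V \ge 0$ and $\Lambda_1 = \asy(\hhbar)$) gives $\|\w_n'\|_{L^2} \le C\hhbar^{-1/2}$.

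The main obstacle is the fine bookkeeping in (i): one must verify that the orthonormalization via $S^{-1/2}$, combined with the diagonal correction from $\theta$ and the off-diagonal contributions for $|m-n|\ge 2$, really produces a strict improvement of rate $\zeta > 0$ beyond $e^{-S_0/\hhbar}$ for every entry of $\tilde D$. This requires three simultaneous estimates: (a) that the raw integrals $\langle \UU_m, H_B \UU_n\rangle$ for $|m-n| \ge 2$ are $O(e^{-2S_0/\hhbar})$ via the Agmon triangle inequality, not merely $O(e^{-S_0/\hhbar})$; (b) that the Neumann expansion of $S^{-1/2}$ preserves this off-diagonal decay; and (c) that the diagonal correction and the band-averaging structure leave only a $e^{-(S_0+\zeta)/\hhbar}$ remainder. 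All three are delicate but standard, being worked out in essentially this form in \cite{Car} and in Appendix A of \cite{FS2}; the proof here is then a transcription with only minor adaptations.
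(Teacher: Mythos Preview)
Your proposal is correct and matches the paper's treatment: the paper does not prove this lemma at all but simply states it as a summary of results from \cite{Car} and Appendix~A of \cite{FS2}, and your sketch accurately reconstructs the content of those references (L\"owdin orthonormalization via $S^{-1/2}$, Toeplitz structure for (ii)--(iii), Agmon decay for (iv)--(vi), and the interaction-matrix estimates for (i)). Your observation that (v) actually yields $C\hhbar^{-1/4}$, which is stronger than the stated $C\hhbar^{-1/2}$, is also correct; the paper uses only the weaker bound.
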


\section {Construction of the discrete nonlinear Stark-Wannier equation} \label {Sec3}

Let $\Pi$ the projection operator associated to the first band $[E_1^b, E_1^t ]$ of $H_B$ (see \S \ref {A.1}) and let $\Pi_\perp =\I - \Pi$. \ Let 
\be
\psi = \psi_1 + \psi_\perp \ \mbox { where } \psi_1 = \Pi \psi \ \mbox { and } \ \psi_\perp = \Pi_\perp 
\psi \, . \label {F7}
\ee
By the Carlsson's construction resumed in \S \ref {A.2} we may write $\psi_1$ by means of a linear combination of a suitable orthonormal base 
$\{ u_n \}_{n \in \Z}$ of the space $\Pi \left [ L^2 (\R ) \right ]$, that is 
\be
\psi_1 (x) = \sum_{n\in \Z} c_n \w_n (x) \, . \label {F8}
\ee
where $u_n \in H^1 (\R )$ and 
\bee
\Vector {c} = \{ c_n \}_{n \in \Z} \in \ell^2_{\R} (\Z ) 
\eee
because $\psi$, and then $\psi_1$, is a real-valued function by Remark \ref {Rem1} and $\w_n$ are real valued too since Lemma 
\ref {Lemma8}.iii. \ \rev {In fact, when we make use of the fixed point argument in \S 7 and when we prove the existence result of 
stationary solutions in \S 8 we work with vectors $\Vector {c} \in \ell^1 (\R )$; then in the sequence we may assume that $\Vector {c} \in \ell^p (\R )$ 
for any $p \in [1,+\infty ]$.}

\begin {remark} \label {Rem3}
By construction 
\bee
\| \psi_1 \|_{L^p} &= & \left \| \sum_{n\in \Z} c_n \w_n \right \|_{L^p} \le \sum_{n\in \Z} |c_n| \max_n \| \w_n \|_{L^p} \le  
\| \Vector {c} \|_{\ell^1} \| \w_0 \|_{L^p} \\
&\le & C \hhbar^{- {(p-2)}/{4p}} \| \Vector {c} \|_{\ell^1} 
\eee
by Lemma \ref {Lemma8}.ii and Lemma \ref {Lemma8}.vi.
\end {remark}

\begin {remark} \label {Rem4} We must underline that the standard tight-binding model is constructed by making 
use of the Wannier functions (see (\ref {F54})) instead of (\ref {F7}) and (\ref {F8}). \ In fact, the decomposition (\ref {F54}) turns out to be 
more natural and it has the advantage to work for any range of $\hhbar$; decompositions (\ref {F7}) and (\ref {F8}) are more 
powerful than (\ref {F54}) in the semiclassical regime of $\hhbar \ll 1$ and they have the great advantage that the vectors $\w_n$ are 
explicitly constructed by means of the semiclassical approximation (see Lemma \ref {Lemma8}).
\end {remark}

By inserting (\ref {F7}) and (\ref {F8}) in equation (\ref {F6}) then it takes the form
\be
\left \{ 
\begin  {array}{lcl}
\lambda c_n &=& \langle \w_n , H_B \psi \rangle + F \langle \w_n , \W \psi \rangle + \eta \langle \w_n , 
\psi^{3} \rangle \, , \ n \in \Z \\ 
\lambda \psi_\perp &=& \Pi_\perp H_B \psi + F \Pi_\perp \W \psi + \eta \Pi_\perp \psi^{3} 
\end  {array}
\right.  \, , \label {F9}
\ee
where $\Vector {c} \in \ell^2_{\R}$ and $\psi_\perp$ are such that 
\bee
\| \psi \|^2_{L^2} = \| \Vector {c} \|^2_{\ell^2} + \| \psi_\perp \|^2_{L^2} \, . 
\eee

The following result immediately follows by Lemma \ref {Lemma8}.

\begin {lemma} \label {Lemma1} We have that 
\bee
\langle \w_n , H_B \psi \rangle = \Lambda_1 c_n - \beta (c_{n+1} + c_{n-1}) + r_1^n \, , 
\eee
where $\beta$ satisfies (\ref {F55Bis}) and 
\bee
\ r_1^n := \sum_{m\in \Z} \tilde D_{n,m} c_m 
\eee
where $\tilde D_{n,m}$ is defined by Lemma \ref {Lemma8}.i and it satisfies to the following estimate for some $\zeta >0$: let $\Vector {r_1} =\{ r_1^n \}_{n\in \Z}$ 
and $\Vector {c} =\{ c_n \}_{n\in \Z} \in \ell^p_{\R}$, then 
\bee
\| \Vector {r_1} \|_{\ell^p} \le C e^{-(S_0 + \zeta )/\hhbar} \| \Vector {c} \|_{\ell^p} \, ,\ \forall p\in [1,+\infty]\, , 
\eee
for some positive constant $C:=C_p >0$.
\end {lemma}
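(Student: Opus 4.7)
The plan is to reduce the statement to a direct application of Lemma \ref{Lemma8}.i combined with the orthogonality of the spectral decomposition $\psi = \psi_1 + \psi_\perp$. First I would split
\bee
\langle \w_n , H_B \psi \rangle = \langle \w_n , H_B \psi_1 \rangle + \langle \w_n , H_B \psi_\perp \rangle
\eee
and observe that the cross term vanishes. Indeed, $\w_n \in \Pi[L^2(\R)]$, and since $H_B$ is self-adjoint and $\Pi$ is the spectral projector on the first band, $H_B$ leaves $\Pi[L^2(\R)]$ invariant, so $H_B \w_n \in \Pi[L^2(\R)]$. Because $\psi_\perp = \Pi_\perp \psi \in \Pi_\perp[L^2(\R)]$ is orthogonal to $\Pi[L^2(\R)]$, moving $H_B$ onto $\w_n$ yields $\langle \w_n , H_B \psi_\perp \rangle = \langle H_B \w_n , \psi_\perp \rangle = 0$.

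Next I would expand $\psi_1 = \sum_{m\in \Z} c_m \w_m$ and exchange the sum with the inner product. The exchange is legitimate since $\Vector{c} \in \ell^2_\R$ and $\{\w_m\}$ is an orthonormal basis of $\Pi[L^2(\R)]$ (one obtains convergence in $L^2$, and $H_B$ is a bounded operator on this spectral subspace because the first band $[E_1^b, E_1^t]$ is bounded). This produces
\bee
\langle \w_n , H_B \psi \rangle = \sum_{m \in \Z} \langle \w_n , H_B \w_m \rangle\, c_m .
\eee
Inserting the decomposition of the matrix $(\langle \w_n , H_B \w_m \rangle)$ given by Lemma \ref{Lemma8}.i, namely $\Lambda_1 \I - \beta {\mathcal T} + \tilde D$, and noting that $(\Lambda_1 \I - \beta {\mathcal T}) \Vector{c}$ yields exactly the tridiagonal expression $\Lambda_1 c_n - \beta(c_{n+1}+c_{n-1})$, the remainder term is $r_1^n = (\tilde D \Vector{c})_n = \sum_{m \in \Z} \tilde D_{n,m} c_m$, as claimed.

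Finally, the $\ell^p$ estimate for $\Vector{r_1}$ is an immediate consequence of the operator-norm bound (\ref{F56}) from Lemma \ref{Lemma8}.i: viewing $\tilde D$ as a bounded operator $\ell^p \to \ell^p$ with norm $\asy(e^{-(S_0+\zeta)/\hhbar})$, we get $\| \Vector{r_1}\|_{\ell^p} = \| \tilde D \Vector{c}\|_{\ell^p} \le C e^{-(S_0+\zeta)/\hhbar} \| \Vector{c}\|_{\ell^p}$ for every $p \in [1,+\infty]$, with a constant that depends only on $p$. There is no real obstacle here: the only subtle point is justifying the vanishing of the cross term via the commutation of $H_B$ with the spectral projector $\Pi$, and this is precisely the reason the decomposition (\ref{F7})--(\ref{F8}) was set up on the spectral subspace rather than on a generic Wannier basis.
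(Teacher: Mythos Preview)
Your proof is correct and matches the paper's approach: the paper simply states that the lemma ``immediately follows by Lemma~\ref{Lemma8}'', and what you have written is precisely the unpacking of that claim, including the vanishing of the cross term via $[\Pi,H_B]=0$ and the direct application of the matrix decomposition and bound from Lemma~\ref{Lemma8}.i.
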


Let $\da (x,y)$ be the Agmon distance between two points $x,y \in \R$ and let $S_0 := \da (x_n, x_{n+1})$, $n \in \Z$, be the Agmon distance between the bottoms $x_n$ 
and $x_{n+1}$ of two adjacent wells of the periodic potential $V$ (for further details see \S \ref {A.2}); by periodicity $S_0$ does not depend on the index $n$.

\begin {lemma} \label {Lemma2} We have that
\bee
\langle \w_n , \W \psi \rangle = a \tilde   \Chi (n) c_n + r_2^n + r_3^n \, , 
\eee
where for any $\rho >0$ there exists $C:=C_\rho$ such that 
\bee
\| \Vector {r_2} \|_{\ell^1} \le C e^{-(S_0- \rho )/\hhbar } \| \Vector {c} \|_{\ell^1} 
\eee
and there exists $C>0$ such that
\bee
\| \Vector {r_3} \|_{\ell^1} \le   C \| \psi_\perp \|_{L^2} \, . 
\eee
Furthermore, $|\tilde   \Chi (n)| \le C$ for any $n$ because $\W$ is bounded, and
\be
\tilde   \Chi (n) =  \frac {C_0}{a} +   \Chi (n) + \tilde \asy \left ( e^{-S_0/\hhbar } \right )  \label {F11} 
\ee
where $\Chi (n)$ is a bounded function such that 
\bee
\Chi (n) =  n \, \mbox { if } |n|\le N \, , \mbox { and } \ \, C_0 = \int_{a_-}^{a_+} x |\w_0 (x) |^2 dx \,  
\eee
where $a_\pm$ are such that $a_- < x_0=0 <a_+$ and $d_A (a_- , x_0) = d(x_0 , a_+) = \frac 12 S_0$; by 
construction $a_+ - a_- = a$.  
\end {lemma}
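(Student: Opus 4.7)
\emph{Plan.} I would begin by splitting
\begin{equation*}
\langle \w_n , \W \psi \rangle
= c_n \langle \w_n , \W \w_n \rangle
+ \sum_{m \neq n} c_m \langle \w_n , \W \w_m \rangle
+ \langle \w_n , \W \psi_\perp \rangle,
\end{equation*}
and identifying the three pieces with $a\tilde{\Chi}(n)c_n$, $r_2^n$ and $r_3^n$ respectively. Each is then estimated in turn.

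For $r_3^n$, since $\W$ is bounded with compact support $\Omega$, Cauchy--Schwarz gives $|r_3^n|\le \|\W\|_\infty \|\w_n \I_\Omega\|_{L^2} \|\psi_\perp\|_{L^2}$. Because $\w_n = T_a^n \w_0$ is Agmon-localized near $x_n=na$ (Lemma~\ref{Lemma8}.ii), the cut-off norms $\|\w_n\I_\Omega\|_{L^2}$ decay exponentially once $|n|$ exceeds $\mathrm{diam}(\Omega)/a$, so $\sum_n \|\w_n\I_\Omega\|_{L^2}\le C$ uniformly in $\hhbar$, which yields $\|\Vector{r_3}\|_{\ell^1}\le C\|\psi_\perp\|_{L^2}$. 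For $r_2^n$, Lemma~\ref{Lemma8}.iv combined with the boundedness of $\W$ gives $|\langle\w_n,\W\w_m\rangle|\le C e^{-[(S_0-\rho')|n-m|-\rho'']/\hhbar}$ for $m\neq n$; a Fubini exchange together with the resulting geometric series then produces $\|\Vector{r_2}\|_{\ell^1}\le C_\rho e^{-(S_0-\rho)/\hhbar}\|\Vector{c}\|_{\ell^1}$ with $\rho=\rho'+\rho''$ arbitrarily small.

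For the diagonal coefficient, translation invariance $\w_n=T_a^n\w_0$ gives $a\tilde\Chi(n)=\int_\R \W(y+na)\w_0(y)^2\,dy$. For $|n|\le N$ I would split the integral at $y=a_\pm$: on $\R\setminus(a_-,a_+)$ the Agmon estimate $|\w_0(y)|^2=\aasy(e^{-S_0/\hhbar})$ (valid past Agmon distance $S_0/2$ from $x_0=0$) combined with boundedness of $\W$ contributes only $\aasy(e^{-S_0/\hhbar})$; on $(a_-,a_+)$ the translate $y+na$ lies (up to an exponentially small sliver) in $[-Na,Na]$, so $\W(y+na)=y+na$ by Hyp.2 and we obtain
\begin{equation*}
na\int_{a_-}^{a_+}\w_0^2\,dy + \int_{a_-}^{a_+} y\w_0(y)^2\,dy = na + C_0 + \aasy(e^{-S_0/\hhbar}),
\end{equation*}
using $\|\w_0\|_{L^2}=1$. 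Dividing by $a$ confirms $\Chi(n)=n$ for $|n|\le N$. For $|n|>N$ the bound $|\tilde\Chi(n)|\le\|\W\|_\infty/a$ is immediate, and $\Chi(n):=\tilde\Chi(n)-C_0/a$ (modulo the exponential remainder) is a bounded function as required.

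\emph{Main obstacle.} The delicate step is the diagonal analysis at the boundary indices $|n|=N$: the translated interval $(a_-+Na,a_++Na)$ straddles the endpoint of the linear region $[-Na,Na]$ of $\W$, so $\W(y+Na)=y+Na$ fails on a portion of $(a_-,a_+)$ where $\w_0$ is not necessarily exponentially small. Absorbing this discrepancy into the $\aasy(e^{-S_0/\hhbar})$ remainder requires either a small margin in Hyp.2 (i.e.\ linearity of $\W$ slightly beyond $\pm Na$) or a refined Agmon cut-off exploiting the decay of $\w_0$ past $x_0$ on the relevant side.
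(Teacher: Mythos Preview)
Your proposal is correct and follows essentially the same route as the paper: the same three-term splitting, the same use of Lemma~\ref{Lemma8}.iv for $\Vector{r_2}$, the compact support of $\W$ for $\Vector{r_3}$, and the same Agmon-based splitting of the diagonal integral at $a_\pm$ for~(\ref{F11}). Your treatment of $\Vector{r_3}$ via $\sum_n\|\w_n\I_\Omega\|_{L^2}\le C$ is in fact a bit more explicit than the paper's one-line estimate, and the boundary issue you flag at $|n|=N$ is real but is simply not addressed in the paper's own proof (the paper writes $\int_{na+a_-}^{na+a_+}\W(x)|\w_n|^2\,dx=\int_{na+a_-}^{na+a_+}x|\w_n|^2\,dx$ without comment), so you are being more careful, not less.
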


\begin {proof} By inserting (\ref {F7}) and (\ref {F8}) in $\langle u_n , W \psi \rangle $ one gets 
\bee
\langle u_n , W \psi \rangle &=& \langle u_n , W \psi_1 \rangle + \langle u_n , W \psi_\perp \rangle \\ 
&=& \langle u_n , W u_n \rangle c_n + \sum_{m\not= n} \langle u_n , W u_m \rangle c_m + \langle u_n , W \psi_\perp \rangle \, . 
\eee
Now, we set 
\bee
r_2^n = \sum_{m\not= n} \langle u_n , W u_m \rangle c_m \, , \ r_3^n = \langle u_n, W\psi_\perp \rangle \ \mbox { and } \ 
\tilde \Chi (n) = \frac {1}{a} \langle u_n , W u_n \rangle \, . 
\eee
Estimates of $\Vector {r}_2 = \{ r_2^n \}_{n\in \Z}$ and $\Vector {r}_3= \{ r_3^n \}_{n\in \Z}$ directly come from the properties collected in 
Lemma \ref {Lemma8}. \ Indeed
\bee
\left | \langle \w_n , \W \w_m \rangle \right | \le \| \W \|_\infty \| \w_n \w_m \|_{L^1} \le 
C e^{-[(S_0 - \rho')|m-n| - \rho'' ]/\hhbar }
\eee
for any $\rho' ,\rho'' >0$ and some $C>0$, because $\W$ is bounded; hence the estimate 
\bee
\| \Vector {r_2} \|_{\ell^1} \le \sum_n \sum_{m\not= n} \left | \langle \w_n,\W \w_m \rangle \right | \, 
|c_m| \le C e^{-(S_0- \rho )/\hhbar } \| \Vector {c} \|_{\ell^1} 
\eee
follows. \ Similarly, 
\bee
\| \Vector {r_3} \|_{\ell^1} &=& \sum_n \left | \langle \w_n , \W \psi_\perp \rangle \right | \le 
\left \langle  \sum_n | \w_n |\chi_\Omega , |\W \psi_\perp | \right \rangle \\ 
 &\le &  C \| \psi_\perp \|_{L^2} 
\eee
where $\chi_\Omega $ is the characteristic function and where $\Omega $ is the compact support of $W$. \ Concerning 
estimate (\ref {F11}) we consider the term $\langle \w_n , \W \w_n \rangle$ when $|n| \le N$; let 
\bee
&& \langle \w_n , \W \w_n \rangle = \int_{na+a_-}^{na+a_+} \W (x) 
|\w_n (x)|^2 dx + \\ 
&& \ + \int_{-\infty}^{na+a_-} \W (x) 
|\w_n (x)|^2 dx + \int_{na+a_+}^{+\infty} \W (x) 
|\w_n (x)|^2 dx 
\eee
where
\bee
\int_{na+a_-}^{na+a_+} \W (x) 
|\w_n (x)|^2 dx &=& \int_{na+a_-}^{na+a_+} x 
|\w_n (x)|^2 dx \\ 
&=& C_0 + n a \int_{ a_-}^{ a_+} |\w_0 (y)|^2 dy = C_0 +n a \left [ 1+ \tilde \asy 
\left ( e^{-S_0/\hhbar } \right ) \right ] 
\eee
because $\w_n (y+na) = \w_0 (y)$, Lemma \ref {Lemma8}.ii and Lemmata 4.iii and 7 by \cite {FS2}. \ More precisely, let 
$\Omega_0 = \R \setminus \left [a_- , a_+ \right ] $ then
\bee
\int_{ a_-}^{a_+} |\w_0 (y)|^2 dy 
= 1- \int_{\R} | \chi_{\Omega_0} (y)|^2 |\w_0 (y)|^2 dy 
\eee
where $ \chi_{\Omega_0}$ is the characteristic function on $\Omega_0$. \ Then (the properties below 
concerning $\UU_0$ are given in Lemma 4.iii by \cite {FS2}, where $\UU_0 (x)$ is the single well ground state defined in \S \ref {A.2})
\bee
\|  \chi_{\Omega_0} \w_0 \|_{L^2} &\le &\|  \chi_{\Omega_0} \UU_0 \|_{L^2} + \|  \chi_{\Omega_0} (\w_0-\UU_0) \|_{L^2} 
\le \tilde \asy (e^{-S_0/2\hhbar}) + \tilde \asy (e^{-S_0/\hhbar}) \\ 
&=& \tilde \asy (e^{-S_0/2\hhbar}) \, .
\eee
Hence,
\bee
\int_{a_-}^{a_+} |\w_0 (y)|^2 dy = 1 - \tilde \asy (e^{-S_0/\hhbar}) \, . 
\eee
Concerning the estimate of the remainder terms we have that 
\bee
\left | \int_{-\infty}^{na+a_-} \W (x) 
|\w_n (x)|^2 dx \right | \le C \int_{\R}  \chi_{\left ( -\infty,
na+a_- \right )}^2 (x) |\w_n (x)|^2 dx = \tilde \asy \left ( e^{-S_0/\hhbar} 
\right )  
\eee
because $\W$ is bounded and by making use of the same arguments as before. \ Similarly we get the same estimate 
for $\int^{+\infty}_{na+a_+} \W (x) |\w_n (x)|^2 dx $. 
\end {proof}

\begin {remark}
By construction and since $\w_0$ is normalized to one it follows that $ | C_0 |  \le C$ for some positive constant $C>0$ independent of $\hhbar$.
\end {remark}

Finally, concerning the nonlinear term we recall the following result which follows by \cite {FS2} (where we choose $\sigma =1$, for the purpose of completeness the detailed proof is given in a separate appendix).

\begin {lemma} \label {Lemma3} We have that 
\bee
\langle \w_n , \psi^{3} \rangle = C_1 c_n^{3} + r_4^n \, ,
\eee
where
\bee
C_1 = \| \w_n \|_{4}^{4} \equiv  \| \w_0 \|_{4}^{4} 
\eee
and 
\bee
r_4^n = \left (\langle \w_n , \psi^{3} \rangle - C_1 c_n^{3} \right ) 
\eee
satisfies to the following estimate: let $\Vector {r_4} = \{ r_4^n \}_{n\in \Z}$, then for any $\rho >0$ there exists $C:=C_\rho$ such that 
\bee
\| \Vector {r_4} \|_{\ell^1} \le
C \left [ \hhbar^{-1/2} \| \psi_\perp \|_{H_1}^3 + \| \Vector {c} \|_{\ell^1} \| \psi_\perp \|_{H_1}^2 +  \| \Vector {c} \|_{\ell^1}^2 
\hhbar^{-1/4}  \| \psi_\perp \|_{H^1} + \| \Vector {c} \|^3_{\ell^1} e^{-(S_0-\rho )/\hhbar } \right ] 
\eee
\end {lemma}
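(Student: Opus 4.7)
The plan is to write $\psi = \psi_1 + \psi_\perp$ and expand
\[
\psi^3 = \psi_1^3 + 3\psi_1^2 \psi_\perp + 3\psi_1 \psi_\perp^2 + \psi_\perp^3,
\]
then test against $\w_n$ and handle the four contributions separately. The principal term $C_1 c_n^3$ will arise from the fully diagonal index in $\langle \w_n, \psi_1^3 \rangle = \sum_{j,k,l} c_j c_k c_l \int \w_n \w_j \w_k \w_l\,dx$, namely $j=k=l=n$, which produces $c_n^3 \|\w_n\|_{L^4}^4 = C_1 c_n^3$ by translation invariance (Lemma \ref{Lemma8}.ii); every other piece is collected into $r_4^n$.

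For the off-diagonal part of $\langle \w_n, \psi_1^3\rangle$ I would H\"older-split each quadruple integral by estimating one pairwise product in $L^1$ through Lemma \ref{Lemma8}.iv, namely $\|\w_p\w_q\|_{L^1} \le C e^{-[(S_0-\rho')|p-q|-\rho'']/\hhbar}$, and bounding the two remaining factors in $L^\infty$ via Lemma \ref{Lemma8}.vi (at $p=\infty$, giving $\|\w_r\|_{L^\infty} \le C\hhbar^{-1/4}$). The resulting negative power of $\hhbar$ is absorbed into an arbitrarily small enlargement of $\rho$ in the sense of the $\aasy$-notation. Summing the geometric series $\sum_{m\ne 0} e^{-(S_0-\rho)|m|/\hhbar} \le C e^{-(S_0-\rho)/\hhbar}$ in the index differences and factoring the $\ell^1$-norms of $\Vector{c}$ then produces the claimed contribution $\|\Vector{c}\|_{\ell^1}^3 e^{-(S_0-\rho)/\hhbar}$.

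For the three pieces involving $\psi_\perp$ I would apply the following $L^p$ toolbox. For $\psi_\perp^3$, I would write
\[
\sum_n |\langle \w_n, \psi_\perp^3 \rangle| \le \Bigl\|\sum_n |\w_n|\Bigr\|_{L^\infty} \|\psi_\perp\|_{L^3}^3 \le C \hhbar^{-1/2} \|\psi_\perp\|_{H^1}^3,
\]
combining Lemma \ref{Lemma8}.v with the one-dimensional embedding $H^1 \hookrightarrow L^\infty$. For $\psi_1 \psi_\perp^2 = \sum_m c_m \w_m \psi_\perp^2$, the diagonal sum $\sum_n c_n \int \w_n^2 \psi_\perp^2\,dx$ is controlled by $\|\w_n\|_{L^2}=1$ together with $\|\psi_\perp\|_{L^\infty}^2 \le C \|\psi_\perp\|_{H^1}^2$, producing $C\|\Vector{c}\|_{\ell^1}\|\psi_\perp\|_{H^1}^2$, while the off-diagonal $m \ne n$ pieces are exponentially small via Lemma \ref{Lemma8}.iv and can be absorbed into the same term. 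For $\psi_1^2 \psi_\perp$, the non-negligible contribution is the diagonal $j=k=n$,
\[
\sum_n c_n^2 \int |\w_n|^3|\psi_\perp|\,dx \le \|\Vector{c}\|_{\ell^2}^2 \|\w_0\|_{L^3}^3 \|\psi_\perp\|_{L^\infty} \le C\|\Vector{c}\|_{\ell^1}^2 \hhbar^{-1/4}\|\psi_\perp\|_{H^1},
\]
which uses Lemma \ref{Lemma8}.vi at $p=3$ (so that $\|\w_0\|_{L^3}^3 \le C\hhbar^{-1/4}$) and the 1D embedding; the remaining triple-product off-diagonal terms are again exponentially small and absorbed.

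The main technical bookkeeping issue is that every passage from a triple or quadruple Wannier integral to a pairwise $L^1$ estimate multiplied by $L^\infty$ factors costs a negative power of $\hhbar$, and these prefactors must be absorbed into the effective Agmon exponent through the $\aasy$ convention of the paper. Once this absorption is carried out systematically, the geometric summation in the index differences together with the diagonal $L^p$ estimates above assemble exactly into the four terms of the stated bound on $\|\Vector{r_4}\|_{\ell^1}$.
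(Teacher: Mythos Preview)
Your proposal is correct and follows essentially the same route as the paper's proof (given in the Appendix): the same binomial expansion of $\psi^3$, the same identification of $C_1 c_n^3$ as the fully diagonal quadruple integral, the citation-equivalent treatment of the off-diagonal part of $\langle \w_n,\psi_1^3\rangle$ via Lemma~\ref{Lemma8}.iv with absorption of polynomial $\hhbar$-prefactors into the Agmon exponent, and the same $L^1$--$L^\infty$ H\"older splits combined with Lemma~\ref{Lemma8}.v, vi and the embedding $H^1(\R)\hookrightarrow L^\infty(\R)$ for the three $\psi_\perp$-terms. The only cosmetic differences are that the paper bounds $\|\psi_\perp\|_{L^2}^2\|\psi_\perp\|_{L^\infty}$ rather than $\|\psi_\perp\|_{L^3}^3$ in the cubic-$\psi_\perp$ term, and uses $\|\w_n\|_{L^2}^2\|\w_n\|_{L^\infty}$ rather than $\|\w_0\|_{L^3}^3$ in the $\psi_1^2\psi_\perp$ diagonal; both choices yield the same $\hhbar$-powers.
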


\begin {remark} \label {Rem7}
By Lemma \ref {Lemma8}.vi it follows that $C_1 \sim \hhbar^{- {1}/{2}}$ as $\hhbar$ goes to zero.
\end {remark}

Therefore, equation (\ref {F9}) takes the form
\be
\left \{ 
\begin  {array}{lcl}
\lambda c_n &=& (\Lambda_1 + F C_0) c_n - \beta (c_{n+1} + c_{n-1}) + F   \Chi (n) a c_n + 
\eta C_1 c_n^{3}  + r^n \, ,  \\
\lambda \psi_\perp &=& H_B  \psi_\perp + F \Pi_\perp \W \psi + \eta \Pi_\perp \psi^{3} 
\end  {array}
\right. \label {F12}
\ee
where 
\be
r^n = r_1^n + F(r_2^n + r_3^n) + \eta r_4^n + F r_5^n , \ r_5^n := a \left [  \tilde   \Chi (n) -   \Chi (n) - \frac {C_0}{a} \right ] c_n \, . \label {F13} 
\ee

\begin {definition} \label {Definizione1} 
We define the discrete nonlinear Stark-Wannier equation (hereafter DNLSWE) as 
\be
\lambda g_n &=& (\Lambda_1 + F C_0) g_n - \beta (g_{n+1} + g_{n-1}) + F   \Chi (n) a g_n + 
\eta C_1 g_n^{3}  \label {F14}
\ee
where $\Vector {g} =\{ g_n \}_{n\in \Z} \in \ell^2_{\R} (\Z) $.
\end {definition}

As already explained in Remark 2 we expect that the solutions to equation (\ref {F14}) are displaced, when $\Chi (n) \equiv n$ (corresponding to the 
case $W(x) \equiv x$), on regular ladders, that is the solutions $\lambda$ are of the form $\lambda_j = \lambda_0 + j F a$ for some $\lambda_0 \in \R$ and 
any $j \in \Z$. \ We will call, hereafter, the value $\lambda_j$ as the $j$-th \emph {rung} of the ladder connected to $\lambda_0$. \ In the case that $W(x)$ 
is a linear function on an interval $[-N a,N a]$, according with Hyp. 2, then we will see that the structure of the ladder locally occurs, so even in such a 
case we may speak of \emph {rungs} of such a kind of ladders of stationary solutions.

\section {Anticontinuous limit of the DNLSWE} \label {Sec4} Let us set 
\be
\tilde \lambda := \lambda - (\Lambda_1 + F C_0 ) \, , \ \nu := \eta C_1 \, , \ f := F a \label {F13Bis}
\ee
where
\be
f \sim \hhbar^2 \, , \ \nu \sim \hhbar^{ 3/2} \ \mbox { and } \ \beta = \aasy \left ( e^{-S_0/\hhbar} \right ) \label {F15}
\ee
since Remark \ref {Rem7} and Lemma \ref {Lemma8}.i. \ For argument's sake, we assume that $f \, , \ \nu \ge 0$. \ Hence (\ref {F14}) takes the form
\be
\tilde \lambda g_n = - \beta \left ( g_{n+1} + g_{n-1} \right ) + f   \Chi (n) g_n + \nu g^{3}_n \label {F16} \, . 
\ee
and in the \emph {anticontinuous limit} $\beta \to 0$ then (\ref {F16}) becomes
\be
\left (\tilde \lambda - \nu d_n^{2} \right )d_n = f   \Chi (n) d_n \, , \Vector {d} = \{ d_n \}_{n \in \Z} \in \ell^2_\R (\Z )\, .  \label {F17}
\ee

\subsection {Finite-mode solutions to the anticontinuous limit equation (\ref {F17})}

Here, we look for stationary solutions $\Vector {d} \in \ell^2_\R$ to (\ref {F17}) under the normalization condition 
\bee
\| \Vector {d} \|_{\ell^2 }^2 = \sum_{n \in \Z} d_n^2 =1 \, . 
\eee

\begin {definition} \label {Definizione2} 
We say that the anticontinuous limit equation (\ref {F17}), under the normalization condition $\| \Vector {d} \|_{\ell^2} =1$,  
has a one-mode solution if there exists a set $S \subset \Z$, hereafter called solution-set, with finite cardinality,  
a real value $\mu^S$ and a normalized vector $\Vector {d}^S = \{ d_n^S\}_{n\in \Z} \in \ell^2_{\R} (\Z)$ where $\mu^S$ and 
$\Vector {d}^S$ solve 
\be
\left (\mu^S - \nu d_n^{2} \right )d_n = f   \Chi (n) d_n  \, , \ \mbox { with } d_n \not= 0 \mbox { if } n \in S \label {F18}
\ee
and where $d_n^S =0$ if $n\notin S$. \ The real value $\mu^S$ is hereafter called the ``energy'' associated to the 
stationary solution $\Vector {d}^S$.
\end {definition}

When $\nu =0$ then we simple recover a (kind of) Stark-Wannier ladder, that is the solution-sets are given by simple sets of 
the form $S=\{ j \}$ for any $j \in \Z$ and we have a family of admitted ``energies'' 
$\mu^S = f   \Chi (j )$ with associated 
stationary solutions $\Vector {d}  =\left \{ \delta_n^{j} \right \}_{n\in \Z}$. \ In fact, it is an exact Stark-Wannier ladder 
when $\Chi (n) \equiv n$.

Assume now that the effective nonlinearity strength is not zero, that is $\nu >0$ for argument's sake. \ In such a case, equation 
(\ref {F18}) has finite mode solutions ${\bf d}^S = \{ d_n^S\}_{n \in \Z}$, associated to sets $S \subset \Z$ with finite 
cardinality ${\mathcal N} = \sharp S < \infty$, given by 
\be
d^S_n = 
\left \{
\begin {array}{ll}
 0 & \ \mbox { if } n \notin S \\ 
\pm \left [ \frac {\mu^S - f   \Chi (n)}{\nu } \right ]^{1/2 } & \ \mbox { if } n \in S 
\end {array}
\right.  \, , \label {F19}
\ee
with the condition 
\be
\mu^S - f   \Chi (n) > 0 \, , \ n \in S \, , \label {F20}
\ee
because we have assumed that $d_n^S \in \R$ and $\nu >0$. \ The normalization condition reads
\be
1 = \| \Vector {d^S} \|_{\ell^2} = \sum_{n \in S} (d_n^S)^{2 } = \sum_{n \in S} \frac {\mu^S - f   \Chi (n)}{\nu} \, . \label {F21}
\ee
In the case ${\mathcal N}=1$ then $S= \{ j \}$ again for any $j \in \Z$ and (\ref {F21}) reduces to
\bee
\mu^{S} =\nu +  {f  \Chi (j )}
\eee
where condition (\ref {F20}) holds true because we have assumed that $\nu > 0$; the associated 
stationary solution ${\bf d}^{S}$ takes the form:
\bee
d^{S}_n = 
\left \{
\begin {array}{ll}
 0 & \ \mbox { if } n \not= j \\ 
 \pm 1 & \ \mbox { if } n=j 
\end {array}
\right. \, . 
\eee
That is we recover a kind of (perturbed) Stark-Wannier ladder.

\begin {remark}
From this fact we can conclude that the anticontinuous limit (\ref {F17}) always admits a ladder-type family of normalized 
one-mode solutions.
\end {remark}

\subsection {Finite-mode solutions to equation (\ref {F18}) associated to solution-sets $S$ with finite cardinality bigger that 1} 
In order to look for finite-mode solutions 
with ${\mathcal N}>1$ the normalization condition (\ref {F21}) implies that 
\be
\mu^S = \frac {\nu}{{\mathcal N}} + \frac {f}{{\mathcal N}} \sum_{n \in S}   \Chi (n ) \ \mbox { with } \ \max_{n \in S}   \Chi (n ) < 
\frac {\mu^S}{f} \, .  \label {F22}
\ee

\subsubsection {Existence of finite-mode solutions.} Stationary solutions ${\bf d}^S$ associated to the energy (\ref {F22}) are 
given by
\be
d^S_n = 
\left \{
\begin {array}{ll}
 0 & \ \mbox { if } n \notin S \\ 
\pm \left [ \frac 1{\mathcal N} + \frac {f}{\nu {\mathcal N}} \sum_{\ell \in S}   \Chi (\ell ) - \frac {f}{\nu}   \Chi (n) \right ]^{1/2} & \ \mbox { if } n \in S 
\end {array}
\right. \, . \label {F23}
\ee

In the case ${\mathcal N}=2$ let $S= \{ j,j +\ell_1 \}$ with $\ell_1 >0$. \ The eigenvalue equation (\ref {F22}) becomes 
\be
\mu^S =\frac {\nu}{2} + \frac {f}{2} \left [   \Chi (j) +   \Chi (j +\ell_1) \right ]  \label {F24}
\ee
where condition $\max_{n\in S}   \Chi (n) < \frac {\mu^S}{f}$ becomes 
\bee
f   \Chi (j +\ell_1) < \frac 12 \nu + \frac 12 f \left [   \Chi (j) +   \Chi (j + \ell_1) \right ]\, , 
\eee
that is 
\be
0 \le \left [   \Chi (j+\ell_1) -   \Chi (j) \right ] < \frac {\nu}{f} \, . \label {F25}
\ee

In conclusion, if 

\begin {itemize}

\item [-] $ \frac {\nu}{f} \le \left [   \Chi (j+\ell_1) -   \Chi (j) \right ]$ then (\ref {F25}) is not satisfied and there 
are no stationary solutions associated to solution-sets of the form $S =\{ j +\ell_1 , j \}$ with cardinality $2$;

\item [-] $0 \le \left [   \Chi (j+\ell_1) -   \Chi (j) \right ] < \frac {\nu}{f} $ we have a family of two-mode solutions 
associated to solution-sets $S=\{ j , j+\ell_1 \}$ with $\mu^S$ given by (\ref {F24}) and where
\bee
d^S_n = 
\left \{
\begin {array}{ll}
 0 & \mbox { if } n \not= j, j+\ell_1 \\ 
\pm \left [ \frac 12  + \frac 12 \frac {f }{\nu} \left (   \Chi (j +\ell_1) -   \Chi (j) \right ) \right ]^{1/2} & \mbox { if }n=j \\ 
\pm \left [ \frac 12  - \frac 12 \frac {f }{\nu} \left (   \Chi (j +\ell_1) -   \Chi (j) \right ) \right ]^{1/2} & \mbox { if }n=j+\ell_1
\end {array}
\right. 
\eee

\end {itemize}

Finally, we can extend such an argument to any integer number ${\mathcal N} >1$ obtained the following result.

\begin {theorem} \label {Thm1}
Let $S= \{ j+\ell_0, j+\ell_1, \ldots , j + \ell_{{\mathcal N}-1} \}$, with $j \in \Z$ and 
$0=\ell_0< \ell_1 < \ell_2 < \ldots < \ell_{{\mathcal N}-1}$ positive integer numbers 
such that 
\be
  \Chi \left ( j + \ell_{{\mathcal N}-1} \right ) < 
\frac {\nu}{f {\mathcal N}} + \frac {1}{{\mathcal N}} \sum_{k =0}^{{\mathcal N}-1} 
  \Chi \left ( j +  \ell_k \right ) \label {F26}
\ee
Then $S$ is a solution-set connected to the $j$-th \emph {rung} of a (kind of) Stark-Wannier ladder and equation (\ref {F18}) has a 
${\mathcal N}$-mode solution with  
\be
\mu^S = \frac {\nu}{\mathcal N} + \frac {f}{\mathcal N} \sum_{k =0}^{{\mathcal N}-1}   \Chi 
\left ( j + \ell_k \right ) \label {F27}
\ee
and associated normalized stationary solution given by (\ref {F23}).
\end {theorem}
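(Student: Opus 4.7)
The proof is essentially a direct verification that the ansatz (\ref{F23}) together with the energy (\ref{F27}) solves (\ref{F18}) under the stated hypothesis (\ref{F26}). I would proceed in three short steps.

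First, I would observe that equation (\ref{F18}) decouples completely over $\Z$: at indices $n \notin S$ the ansatz $d_n = 0$ makes both sides vanish trivially, while at indices $n \in S$ we have $d_n \neq 0$ and may cancel $d_n$ from both sides of (\ref{F18}), reducing it to the pointwise algebraic identity
\[
d_n^2 = \frac{\mu^S - f \Chi(n)}{\nu}, \qquad n \in S.
\]

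Second, I would pin down the value of $\mu^S$ using the normalization. Summing the above identity over $n \in S$ and equating the result to $\|\Vector{d}^S\|_{\ell^2}^2 = 1$ yields a single linear equation for $\mu^S$,
\[
\nu = \mathcal{N}\, \mu^S - f \sum_{k=0}^{\mathcal{N}-1} \Chi(j + \ell_k),
\]
which rearranges to (\ref{F27}). Substituting this value back into the pointwise identity immediately gives the explicit expression (\ref{F23}).

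The only real content---and the step I would flag as the main obstacle---is checking that the square-root arguments in (\ref{F23}) are strictly positive, i.e.\ that $\mu^S - f \Chi(n) > 0$ for every $n \in S$, so that $\Vector{d}^S$ is a \emph{real} vector. By Hyp.2 the function $\Chi$ coincides with the identity on the linear window $|n| \le N$, and in the setting of the theorem $S$ is taken inside this window; hence $\Chi$ is strictly increasing on $S$ and attains its maximum at $n = j + \ell_{\mathcal{N}-1}$. Invoking (\ref{F27}), the single inequality (\ref{F26}) is exactly $\Chi(j + \ell_{\mathcal{N}-1}) < \mu^S/f$, which combined with monotonicity secures $\mu^S - f \Chi(n) > 0$ at every $n \in S$. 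Finite cardinality of $S$ then makes membership in $\ell^2_\R(\Z)$ automatic, completing the verification.
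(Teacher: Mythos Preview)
Your proof is correct and follows exactly the paper's approach: the paper works out the cases $\mathcal{N}=1,2$ explicitly in the paragraphs preceding Theorem~\ref{Thm1} (deriving (\ref{F19})--(\ref{F23}) and the condition (\ref{F25})) and then states Theorem~\ref{Thm1} as the straightforward extension to general $\mathcal{N}$, without a separate proof. Your write-up is a clean general-$\mathcal{N}$ version of that same verification; the monotonicity point you flag (that $\Chi(j+\ell_{\mathcal{N}-1})=\max_{n\in S}\Chi(n)$ requires $S\subset[-N,N]$) is indeed left implicit in the paper's statement but is consistently assumed in the subsequent applications (Theorems~\ref{Thm2}--\ref{Thm4}, Remark~\ref{Rem10}).
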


\begin {remark} \label {Rem9}
We should underline that some of such a solution may be associated to the same ``energy'' $\mu^S$. \ For instance let $N>5$ and let us consider the sets 
$S_1=\{ 0, 3, 4\}$ and $S_2 = \{ 0, 2, 5\}$. \ Recalling that $  \Chi $ is a linear function in both sets $S_1$ and $S_2$ then they 
are associated to the same value (where we assume, for argument sake, that $a=1$ and $C_0=0$) of energy 
\bee
\mu = \frac 13 \nu + \frac 73 f \, .
\eee
In Figure \ref {Figura1} - left panel - we plot the 4 solutions (\ref {F23}) corresponding to the set $S_1$. \ In Figure \ref 
{Figura1} - right panel - we plot the solutions (\ref {F23}) with sign $+$, corresponding to the sets $S_1$ and $S_2$.
\end {remark}

\begin{center}
\begin{figure}
\includegraphics[height=5cm,width=5cm]{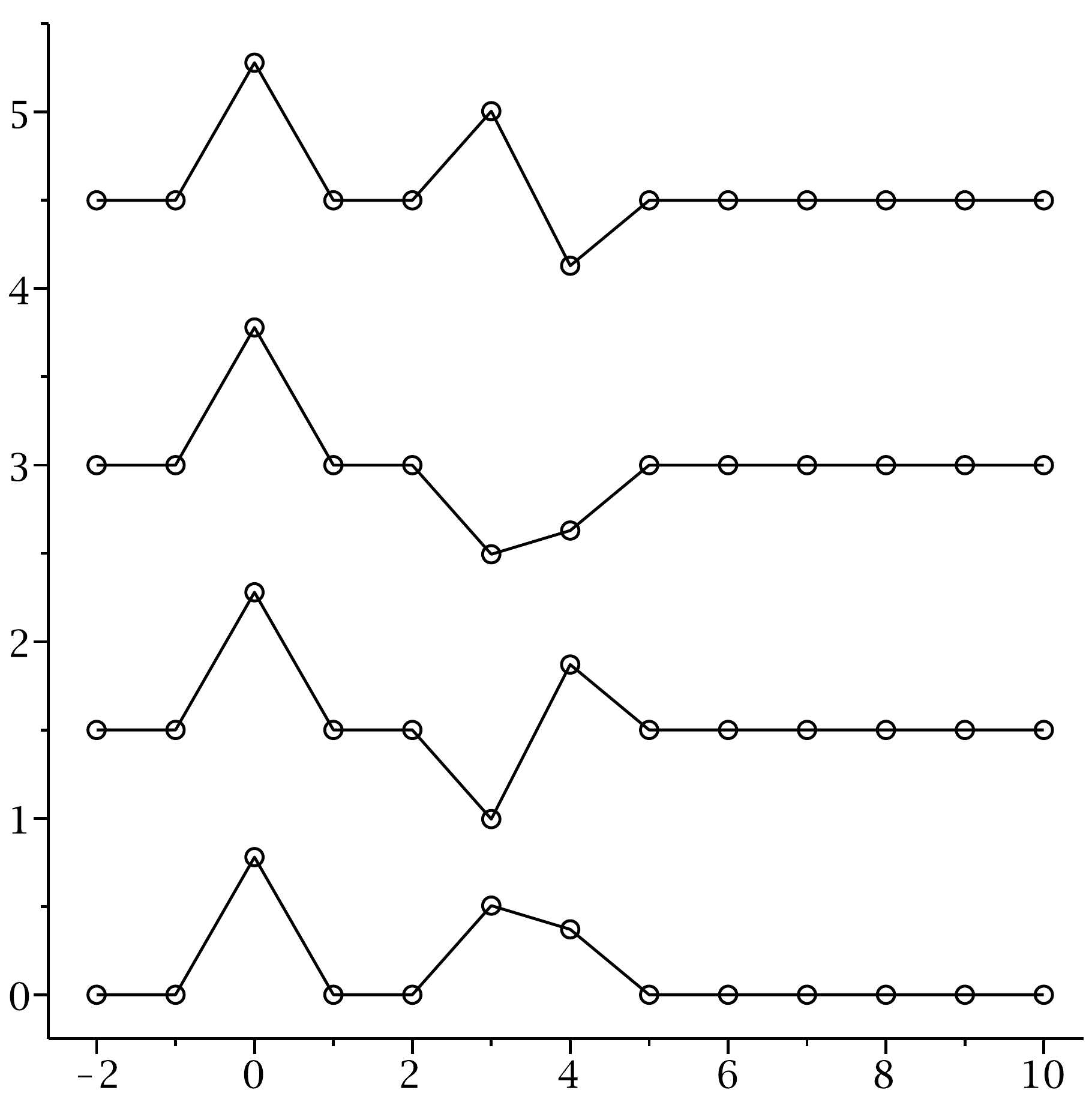}
\includegraphics[height=5cm,width=5cm]{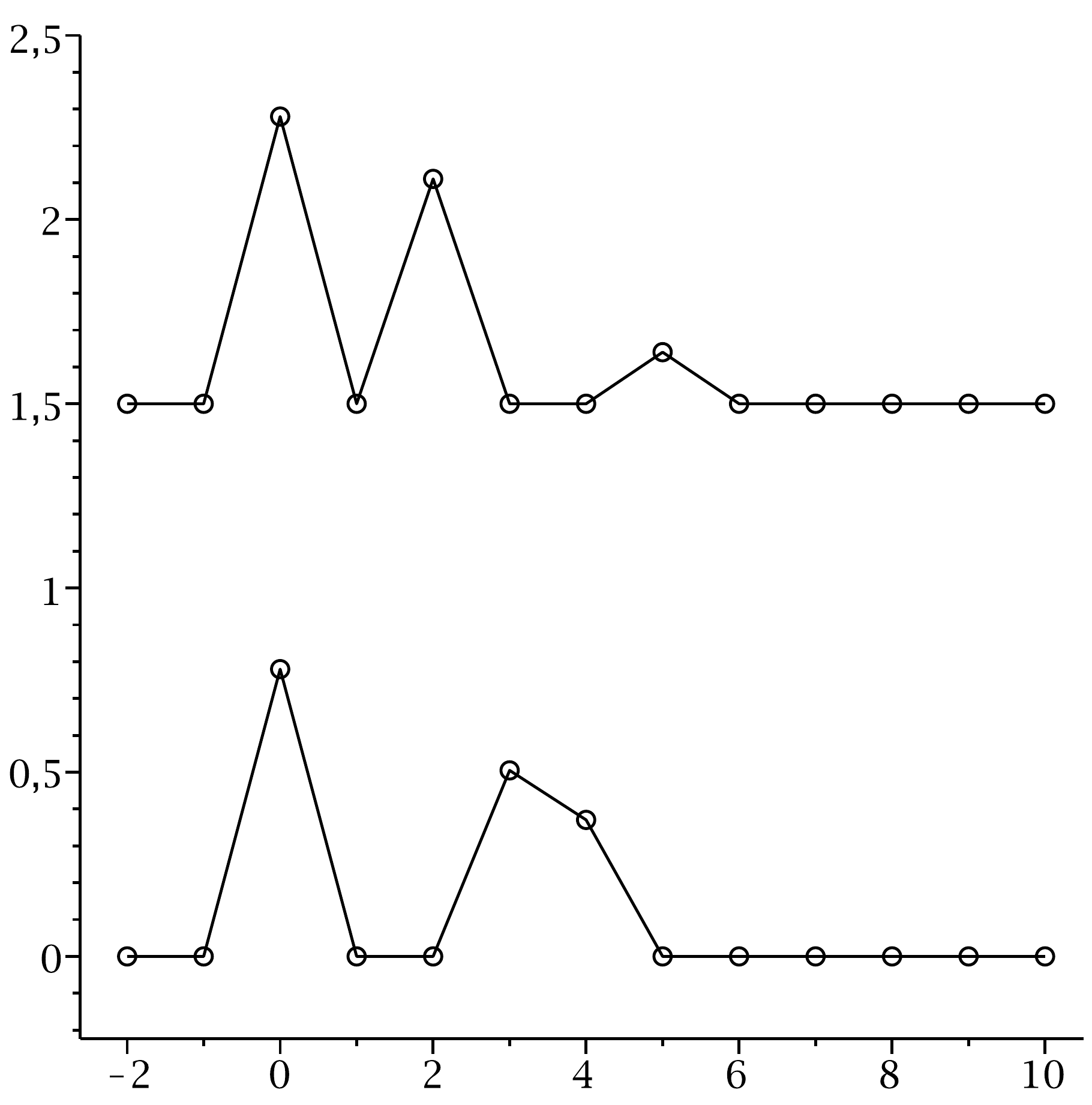}
\caption{\label {Figura1} In the left panel we plot the 4 solutions $\Vector {d}^{S_1}$ corresponding to the solution-set 
$S_1=\{ 0, 3, 4\}$. \ In the right panel we plot the solutions $\Vector {d}^{S_1}$ and $\Vector {d}^{S_2}$ given by (\ref {F23}) 
with sign $+$, corresponding to the solution-sets $S_1$ and $S_2= \{ 0, 2, 5\}$; both solutions are associated to the same value of the energy 
$\mu$.}
\end{figure}
\end{center}

\begin {remark} \label {Rem10}
If the solution-set $S = \{ 0, \ell_1, \ \ldots ,\ \ell_{{\mathcal N}-1} \} \subset [-N,+N]$ then $ \Chi_n (\ell )$, $\ell \in S$, 
is linear and thus we locally recover a Stark-Wannier ladder structure. \ That is $S'=\{ j, j+\ell_1, \ \ldots ,\ j+\ell_{{\mathcal N}-1} \}$ 
is a solution-set too, provided that $|j|, \, |j+\ell_{{\mathcal N}-1} | \le N$, and $\mu^{S'} - \mu^S = j f$. \ We will say 
that $\mu^S$ is connected with the $0$-th \emph {rung} of the ladder, and that  $\mu^{S'}$ is connected with the $j$-th 
\emph {rung} of the ladder.
\end {remark}

\begin {remark} \label {Rem11}
In the limit of $\hhbar$ small enough then $\frac {\nu}{f} = \frac {\eta C_1}{Fa} \sim C_1 \sim \hhbar^{-1/2}$ since Remark \ref {Rem7} and 
(\ref {F15}); therefore the stationary solutions ${\Vector {d}}^S$ takes the value $d_n^S =0$ if $n \notin S$ and $d_n^S \sim \pm 1$ if $n \in S$, and the energy 
$\mu^S$ belongs to an interval with center $\nu \sim C \hhbar^{3/2}$, for some $C>0$, and with amplitude of order $\hhbar^2$.
\end {remark}

\subsubsection {Bifurcation of stationary solutions}

We consider solution-sets $S$ associated to a given \emph {rung} of the (kind of) Stark-Wannier ladder satisfying the condition 
$S\subseteq [-N,+N]$ where $  \Chi(n)$ is a linear function. \ That is we consider energies $\mu^S$ in the interval $[\nu - f N , \nu + f N]$. \ We 
can see that stationary solutions to equation (\ref {F18}) associated to such solution-sets $S$ may bifurcate when the ratio $\nu/f$ is a positive integer number. 

In order to count how many stationary solutions we have let us introduce the following function (see Abramowitz and 
Stegun \cite {AS}, p. 825).

\begin {definition} \label {Definizione3} 
Let $Q(n)$, $n \in \N$, be the number of ways of writing the integer number $n$ 
as a sum of positive integers without regard to order, with the constraint that all integers in a given partition 
are distinct.
\end {definition}

E.g.: $Q(1)=1$, $Q(2)=1$, $Q(3) = 2$ and $Q(4)=2$. 

\begin {theorem} \label {Thm2}
When $\nu/f$ takes the value of a positive integer number then stationary solutions to (\ref {F18}), associated to solution-set 
$S \subset [-N,N]$, bifurcate. \ Furthermore, the total number of solutions-sets $S$ associated to a given \emph {rung} of the (kind of) 
Wannier-Stark ladder, assuming that all these sets $S$ are contained in the interval $[-N,+N]$,  is given by
\begin{eqnarray}
{M}(\nu /f) = \sum_{0< n < \nu /f} Q(n) \, . \label {F28}
\end{eqnarray}
\end {theorem}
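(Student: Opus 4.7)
The plan is to translate the admissibility condition~(\ref{F26}) of Theorem~\ref{Thm1} into a combinatorial counting problem over partitions of integers into distinct positive parts; once this is done, the stated formula and the bifurcation picture follow almost mechanically.

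First, I would fix a rung $j \in \Z$ with $|j| \le N$ and parameterize each admissible $\mathcal{N}$-mode solution-set as $S = \{j + \ell_0, j + \ell_1, \ldots, j + \ell_{\mathcal{N}-1}\}$ with $0 = \ell_0 < \ell_1 < \cdots < \ell_{\mathcal{N}-1}$ integers and $|j + \ell_{\mathcal{N}-1}| \le N$. Because $\Chi(n) = n$ on $\Z \cap [-N,N]$ by Hyp.\,2, the condition~(\ref{F26}) simplifies, after cancelling the common $j$ and multiplying by $\mathcal{N}$, to
$$
\sum_{k=0}^{\mathcal{N}-2} \bigl( \ell_{\mathcal{N}-1} - \ell_k \bigr) < \frac{\nu}{f},
$$
which is independent of $j$. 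Combined with Remark~\ref{Rem10}, this shows that the count is the same for every rung, so one may work exclusively at $j = 0$.

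Second, I would introduce the change of variables $m_k := \ell_{\mathcal{N}-1} - \ell_k$ for $k = 0, 1, \ldots, \mathcal{N}-2$. These form a strictly decreasing set of distinct positive integers $m_0 > m_1 > \cdots > m_{\mathcal{N}-2} > 0$, and the inverse map $\ell_k = m_0 - m_k$ (with $\ell_{\mathcal{N}-1} := m_0$) realizes a bijection between admissible $S$ with $\mathcal{N} \ge 2$ and non-empty partitions of a positive integer into distinct parts. Under this bijection, the admissibility inequality becomes simply $m_0 + m_1 + \cdots + m_{\mathcal{N}-2} < \nu/f$. Grouping partitions by their common sum $n$ and invoking Definition~\ref{Definizione3} yields
$$
M(\nu/f) = \sum_{0 < n < \nu/f} Q(n),
$$
the strict upper bound coming directly from the strict inequality in~(\ref{F26}). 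The trivial single-mode case $\mathcal{N} = 1$, which always gives $S = \{j\}$, corresponds to the empty partition $n = 0$ and is, consistently with the formula, not counted in $M$.

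Finally, the bifurcation statement is read off formula~(\ref{F23}): the component indexed by $n = \ell_{\mathcal{N}-1}$ has the form $d_n^S = \pm \bigl[ \tfrac{1}{\mathcal{N}} - \tfrac{f}{\nu \mathcal{N}} \sum_k m_k \bigr]^{1/2}$ and vanishes precisely when $\sum_k m_k = \nu/f$, which forces $\nu/f$ to be a positive integer. Hence, as $\nu/f$ crosses an integer $k$ from below, each of the $Q(k)$ distinct-part partitions of $k$ nucleates a new pair of $\pm$ branches emerging from the degenerate zero configuration; iterating over $k$ gives the announced pitchfork-type cascade. The only real obstacle is bookkeeping: one must verify that the strict inequality in~(\ref{F26}) correctly excludes the marginal partitions exactly at integer values of $\nu/f$, and that the rung-shift $j$ acts as a bijection of admissible sets across rungs (handled by Remark~\ref{Rem10}), so that the count is genuinely rung-independent.
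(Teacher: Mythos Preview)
Your argument is correct and follows essentially the same route as the paper: translate so that $\min S=0$, pass to the reflected differences $m_k=\ell_{\mathcal N-1}-\ell_k$ (the paper writes this as $S^\star=\{\max S-\ell:\ell\in S\}$), and observe that the admissibility condition becomes $\sum m_k<\nu/f$, so the count is $\sum_{0<n<\nu/f}Q(n)$. Your treatment of the bifurcation mechanism via the vanishing of the extremal component in~(\ref{F23}) is in fact more explicit than the paper's, which simply records the counting identity $\mathcal S^\star(n+1)=\mathcal S^\star(n)\cup\mathcal Q^\star(n)$ and reads the jump off from that.
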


\begin {proof}
First of all, because the stationary problem (\ref {F18}) is translation invariant $n \to n+\ell $ and $\mu^S \to \mu^S - f\ell$, 
provided that the solution-sets are contained in the interval $[-N,N]$, then we can always 
restrict ourselves to the $0$-th \emph {rung} of the ladder such that $ \min S=0$, that is the solution-set has the form 
$S= \{ 0, \ell_1 , \, \ldots \, , \ell_{{\mathcal N}-1} \}$ with $0<\ell_1 < \ell_2 < \ldots < 
\ell_{{\mathcal N}-1} < N$ positive and integer numbers. \ Hence, (\ref {F22}) becomes 
\begin{eqnarray*}
\mu^S = \frac {\nu}{\mathcal N} + \frac {f}{\mathcal N} \sum_{\ell \in S} \ell  \, . 
\end{eqnarray*}
and condition (\ref {F20}) implies the following condition on the solution-set $S$
\begin{eqnarray}
\frac {\nu}{f} > {\mathcal N} \max S - \sum_{\ell \in S} \ell = 
\sum_{\ell \in S} \left [  \max S - \ell \right ] > \sum _{\ell^\star \in S^\star} \ell^\star  
 \label {F29}
\end{eqnarray}
where 
\begin{eqnarray*}
S^\star = \{ \ell^\star := \max S - \ell \ : \ \ell \in S \} \, . 
\end{eqnarray*}
Let ${\mathcal S}^\star (\nu /f)$ be the collection of sets $S^\star$ satisfying (\ref {F29}), and  
let ${\mathcal Q}^\star (n)$ be the collection of sets of all non negative integer numbers, 
including the number $0$, which sum is equal to $n$, without regard to order with the 
constraint that all integers in a given partition are distinct; e.g. ${\mathcal Q}^\star (1) = \left \{ 
\{ 0,1\} \right \}$, ${\mathcal Q}^\star (2) = \left \{ \{ 0,2\} \right \}$ and ${\mathcal Q}^\star (3) = 
\left \{ \{ 0,3\} ,\, \{ 0,1,2\} \right \}$. \ Hence, by construction 
\begin{eqnarray*}
{\mathcal S}^\star (n+1) =  {\mathcal S}^\star (n) \cup {\mathcal Q}^\star (n) \, . 
\end{eqnarray*}
In conclusion, we have shown that the counting function ${M}(\nu /f)$ defined as the number of solution-sets $S$ of 
integer numbers satisfying the conditions (\ref {F29}) and such that $\min S =0$, is given by 
\begin{eqnarray*}
{M}(\nu /f) = \sum_{0< n < \nu /f} Q(n) \, . 
\end{eqnarray*}
Theorem \ref {Thm2} is so proved.
\end {proof}

\begin {remark}
A cascade of bifurcation points, when $\nu /f$ takes the value of any positive integer, occurs; indeed, when the 
ratio $\nu /f$ becomes larger than a positive integer $n$ then $Q(n)$ new stationary solutions appear. \ This 
fact can be seen in Figure \ref {Figura2}, where we plot the values of ${\mu}/f$, when $\nu /f$ belongs 
to the interval $[0,10]$, associated to solution-sets $S$ such that $\min S =0$, that is we plot the value of 
energies associated to the $0$-th \emph {rung} of the (kind of) Wannier-Stark ladder. \ By translation $\mu 
\to \mu + j f $, $j \in {\mathbb Z}$, and thus this picture occurs for each \emph {rung} of the 
ladder and then the collection of values of $\mu$ associated to stationary solutions is going to densely cover intervals of the real axis.
\begin{center}
\begin{figure}
\includegraphics[height=7cm,width=8cm]{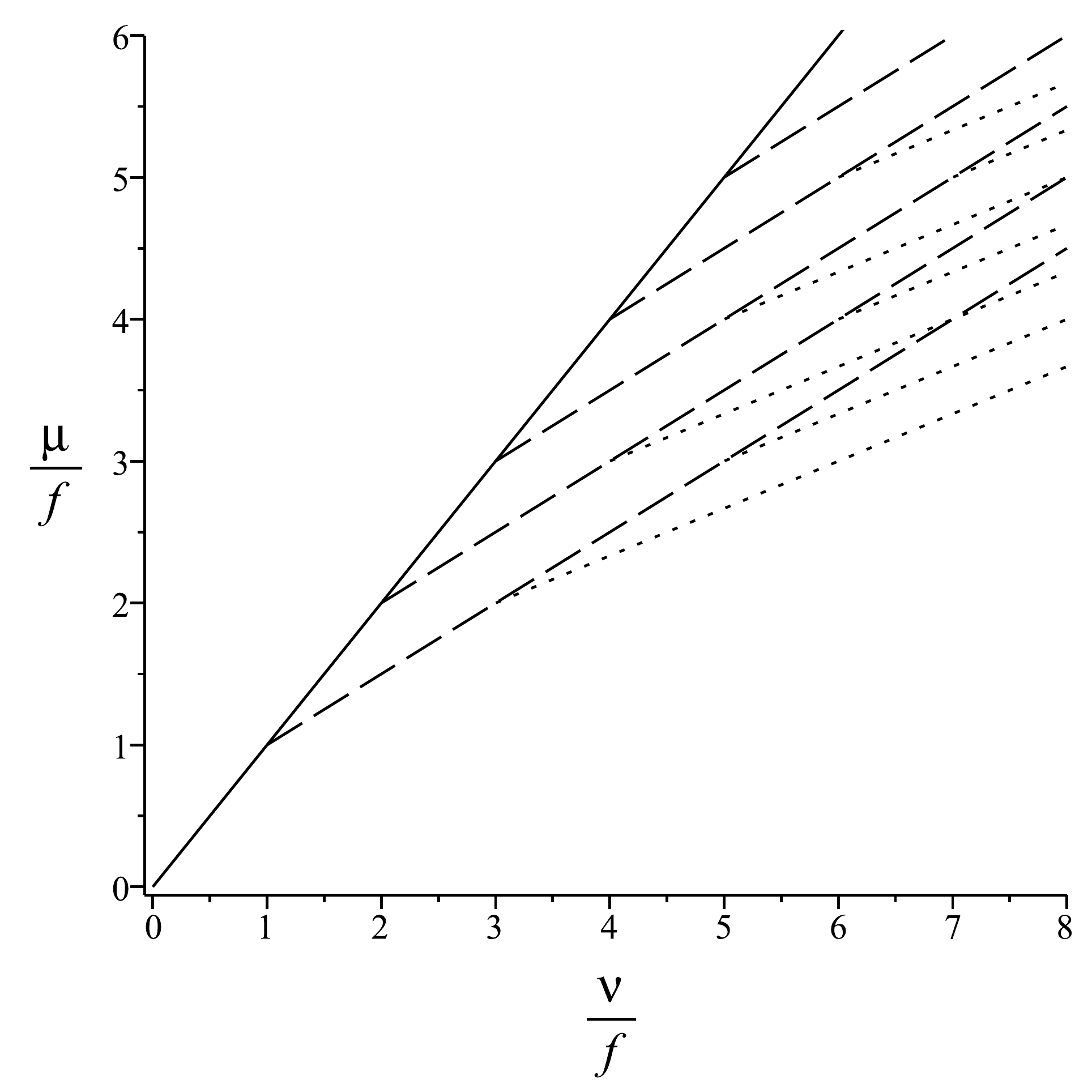}
\caption{\label {Figura2} Here we plot the values of $\mu/f$ associated to stationary solution-sets $S$ such that $\min S =0$ and where ${\mathcal N}=1,2,3$; 
we can see a cascade of bifurcations when $\nu/f$ increases. \ Full line represents the solution corresponding to the $0$-th \emph {rung} 
of the Stark-Wannier ladder localized on the $0$-th cell (${\mathcal N}=1$), broken lines represent the solutions of the same \emph {rung} of the 
Stark-Wannier ladder localized on two cells (${\mathcal N}=2$), and finally point lines represent the solutions of the same \emph {rung} of the 
Stark-Wannier ladder localized on three cells (${\mathcal N}=3$).}
\end{figure}
\end{center}
\end {remark}

\begin {remark}
One can see that ${M}(\nu /f)$ grows quite fast, indeed the following asymptotic behavior holds true \cite {AS}:
\begin{eqnarray*}
 Q(n)\sim \frac {e^{\pi \sqrt {n/3}}}{4\cdot 3^{1/4}n^{3/4}} \ \mbox { as } \ n \to \infty \, .
\end{eqnarray*} 
Hence
\begin{eqnarray*}
{M}(n) \sim \frac {1}{2} \mbox {erfi} \left [ \sqrt {\pi} (n/3)^{1/4} \right ] \sim \frac {\exp \left [  
{\pi} (n/3)^{1/2} \right ] }{2 {\pi} (n/3)^{1/4}} 
\end{eqnarray*}
as $n$ goes to infinity, where $\mbox {erfi} (x) = - i\, \mbox {erf} (ix)$ is the imaginary error function. \ In particular, because 
(see Remark \ref {Rem11}) $\frac {\nu}{f}  \sim C_1 \sim \hhbar^{-1/2}$, then we have that the energy $\mu$ lies 
in an interval $[\nu - f N , \nu + f N]$ with center at $\nu \sim \hhbar^{3/2}$ and amplitude of order $\hhbar^2$, and the number of stationary solutions is of order 
\bee
M \left ( \frac {\nu}{f} \right ) \sim \hhbar^{1/8} e^{C \hhbar^{-1/4}}\, \ \mbox { as } \ \hhbar \ \mbox { goes to zero,} 
\eee
for some positive constant $C$. \ That is the energy spectrum densely fill the interval $[\nu - f N , \nu + f N]$  when $\hhbar $ goes to zero.
\end {remark}

\begin {remark}
Since we assumed that the parameters $\hbar$, $m$, $\alpha_1$ and $\alpha_2$ are fixed and that $\epsilon \ll 1$ then the rescaling (\ref {res}) 
immediately implies that the effective nonlinear coupling strength parameter $\eta$ is of order $\hhbar^2$, where $\hhbar$ plays the role of a semiclassical 
parameter. \ Hence, the bifurcation parameter is such that 
\bee
\frac {\nu}{f} = \frac {\eta C_1}{F a} \sim \hhbar^{-1 /2} \gg 1 
\eee
and we have a dense energy spectrum. \ One can also consider the case where the parameters depend on $\epsilon$; a similar approach has been used, e.g., 
by \cite {FS1}. \ If the nonlinear coupling strength parameter $\alpha_2$ depends on some power by $\epsilon$ then $\eta \sim \hhbar^{u}$ for some 
power $u$ and 
\bee
\frac {\nu}{f} = \frac {\eta C_1}{F a} \sim \hhbar^{u-5/2} \, .  
\eee
If $u < \frac 52 $ then we have a dense energy spectrum as in the case above; if $u > \frac 52$ then we have a single Stark-Wannier 
ladder; if $u = \frac 52$ then we observe the bifurcation phenomenon. 
\end {remark}

\subsubsection {When do ${\mathcal N}$-mode stationary solutions arise from $({\mathcal N}-1)$-mode stationary solutions?} If one looks with more detail the 
bifurcation cascade one can see that we have ${\mathcal N}$-mode solutions for any value of ${\mathcal N}$, provided that $S\subset [-N,N]$ for some $N$ 
large enough. \ Let us restrict our analysis, for sake of simplicity, to solution-sets $S$ contained in the interval $[-N,N]$ where $  \Chi (n)$ is a linear function. 

As said above, ${\mathcal N}$-mode stationary solutions are associated to solution-sets of the form
\begin{eqnarray}
S = \{  j , \  j+\ell_1,\ \ldots \ ,\  j+\ell_{{\mathcal N}-1} \} \label {F31}
\end{eqnarray}
under condition (\ref {F20}). \ Now, let us consider, as a particular family of ${\mathcal N}$-mode solutions, solution-sets of the form (\ref {F31}) for any 
$ j \in {\mathbb Z}$ and $\ell_{r+1}-\ell_{r} =1$, assuming that $|j|, |j+ {\mathcal N} -1| \le N$. \ They are associated to
\begin{eqnarray*}
\mu^S = \frac {\nu}{{\mathcal N}} + f  j + \frac 12 f ({\mathcal N}-1)
\end{eqnarray*}
and then condition (\ref {F20}) implies that 
\begin{eqnarray*}
\frac {{\mathcal N}({\mathcal N}-1)}{2} <  \frac {\nu}{f}  
\end{eqnarray*}
Hence, we can observe a second bifurcation phenomenon: stationary solutions associated to solution-sets 
with ${\mathcal N}$ elements arise from stationary solutions associated to solution-sets with ${\mathcal N}-1$ elements when $\nu /f$ 
becomes bigger than the critical value $\frac 12 {{\mathcal N}({\mathcal N}-1)}$.

We can summarize such a result as follows

\begin {theorem} \label {Thm3}
If $\nu/f <  {{\mathcal N}({\mathcal N}-1)}/2 $ then stationary solutions 
to (\ref {F17}), associated to solution-sets $S \subset [-N , +N]$, are localized on a number of sites less than ${\mathcal N}$, 
at $\nu/f =  {{\mathcal N}({\mathcal N}-1)}/2 $ a stationary solution localized on ${\mathcal N} -1$ sites bifurcates and a new stationary 
solution localized on ${\mathcal N}$ sites arises.
\end {theorem}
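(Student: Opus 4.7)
The plan is to combine the combinatorial counting developed in the proof of Theorem \ref{Thm2} with an elementary minimization, then verify the bifurcation by inspecting the explicit amplitudes (\ref{F23}).

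First, by the translation invariance $n \mapsto n + j$, $\mu^S \mapsto \mu^S - fj$, used in the proof of Theorem \ref{Thm2}, I restrict to solution-sets with $\min S = 0$, i.e. $S = \{0, \ell_1,\ldots,\ell_{\mathcal{N}-1}\}$ with $0 < \ell_1 < \cdots < \ell_{\mathcal{N}-1}$ (where the integers $\ell_k$ are strict, since the elements of $S$ are distinct). Since $S \subset [-N,N]$ and $\Chi(n) = n$ on this range, the existence condition (\ref{F20}) — rewritten as (\ref{F29}) in the proof of Theorem \ref{Thm2} — reads
\begin{eqnarray*}
\frac{\nu}{f} \; > \; \sum_{k=0}^{\mathcal{N}-1}\bigl(\ell_{\mathcal{N}-1} - \ell_k\bigr).
\end{eqnarray*}

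Next I minimize the right-hand side over all strictly increasing sequences $0 = \ell_0 < \ell_1 < \cdots < \ell_{\mathcal{N}-1}$ of non-negative integers. Since $\ell_k \le \ell_{\mathcal{N}-1} - (\mathcal{N}-1-k)$ for each $k$, one has $\ell_{\mathcal{N}-1} - \ell_k \ge \mathcal{N}-1-k$, and therefore
\begin{eqnarray*}
\sum_{k=0}^{\mathcal{N}-1}\bigl(\ell_{\mathcal{N}-1} - \ell_k\bigr) \;\ge\; \sum_{k=0}^{\mathcal{N}-1}(\mathcal{N}-1-k) \;=\; \frac{\mathcal{N}(\mathcal{N}-1)}{2},
\end{eqnarray*}
with equality if and only if $\ell_k = k$ for every $k$, i.e. $S = S_\mathcal{N}^\star := \{0,1,\ldots,\mathcal{N}-1\}$ (the consecutive-integer set). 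Consequently, if $\nu/f < \mathcal{N}(\mathcal{N}-1)/2$ no solution-set of cardinality $\mathcal{N}$ can satisfy the existence condition, so all stationary solutions in $[-N,N]$ are localized on strictly fewer than $\mathcal{N}$ sites. This proves the first assertion of the theorem.

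For the bifurcation statement, I examine the distinguished family $S_\mathcal{N} = \{0,1,\ldots,\mathcal{N}-1\}$, which is the unique set that saturates the inequality. From (\ref{F23}) the amplitude at the top site is
\begin{eqnarray*}
d^{S_\mathcal{N}}_{\mathcal{N}-1} = \pm\left[\frac{1}{\mathcal{N}} + \frac{f}{\nu\mathcal{N}}\,\frac{\mathcal{N}(\mathcal{N}-1)}{2} - \frac{f}{\nu}(\mathcal{N}-1)\right]^{1/2} = \pm\left[\frac{1}{\mathcal{N}} - \frac{f(\mathcal{N}-1)}{2\nu}\right]^{1/2}.
\end{eqnarray*}
This vanishes exactly when $\nu/f = \mathcal{N}(\mathcal{N}-1)/2$, and at that value the remaining amplitudes $d^{S_\mathcal{N}}_k$ for $k = 0,\ldots,\mathcal{N}-2$ reduce to the amplitudes of the $(\mathcal{N}-1)$-mode stationary solution supported on $\{0,1,\ldots,\mathcal{N}-2\}$ (by a direct substitution into (\ref{F23}) with $\mathcal{N}$ replaced by $\mathcal{N}-1$). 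For $\nu/f$ slightly above this critical value, the bracket becomes positive and a pair of genuine $\mathcal{N}$-mode solutions emerges, continuously branching off from the $(\mathcal{N}-1)$-mode solution. This gives the second assertion and completes the proof.

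The only slightly delicate point is the minimization step, but it is a one-line rearrangement inequality because one is merely spacing $\mathcal{N}$ distinct integers as tightly as possible below their maximum; once that is observed, the critical value $\mathcal{N}(\mathcal{N}-1)/2$ and the identification of the bifurcating branch as $S_\mathcal{N} = \{0,1,\ldots,\mathcal{N}-1\}$ are forced.
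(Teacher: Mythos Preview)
Your proof is correct and follows the same line as the paper's discussion preceding Theorem~\ref{Thm3}: both identify the consecutive-integer solution-set $S=\{0,1,\ldots,\mathcal{N}-1\}$ as the one yielding the threshold $\nu/f=\mathcal{N}(\mathcal{N}-1)/2$. Your argument is in fact more complete, since you supply the explicit minimization of $\sum_k(\ell_{\mathcal{N}-1}-\ell_k)$ over all $\mathcal{N}$-element sets (showing no $\mathcal{N}$-mode solution exists below threshold) and verify directly via (\ref{F23}) that the top-site amplitude vanishes at the critical value and the branch merges with the $(\mathcal{N}-1)$-mode solution---points the paper leaves implicit.
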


\section {Existence of solutions to the DNLSWE} \label {Sec5}

Let $\mu^S$ and $\Vector {d}$ b a finite-mode solution to (\ref {F17}) associated to a solution-set $S$ given by Theorem \ref {Thm1}. \ Now, we will 
prove, by a stability argument, that this solution becomes a solution to 
(\ref {F16}) when $\beta$ is small enough. \ To this end we have to remind that $\beta$ goes to zero when $\hhbar$ goes to zero according with Lemma \ref {Lemma8}.i.

\begin {theorem} \label {Thm4}
Let $\frac {\nu}{f} \notin \N$. \ Let $S$ be a solution-set to (\ref {F17}) with associated energy $\mu^S$ and normalized 
stationary solution $\Vector {d}^S$  given by (\ref {F19}). \ We assume that $S \subset [-N,N]$. \ Then, if $\beta$ is small enough there exists a stationary 
solution $\Vector {g}^S \in \ell^1_{\R}$ to the DNLSWE (\ref {F16}) associated to $\tilde \lambda = \mu^S$ and such that 
\bee
\| \Vector {g}^S - \Vector {d}^S \|_{\ell^1} = \aasy \left ( e^{- S_0/\hhbar } \right ) \ \mbox { and } \ \| \Vector {g}^S \|_{\ell^2} =1 + \aasy 
\left ( e^{-S_0/\hhbar } \right ) \, . 
\eee
\end {theorem}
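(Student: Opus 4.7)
The plan is to construct $\Vector{g}^S = \Vector{d}^S + \Vector{v}$ by a Banach fixed-point argument in $\ell^1_{\R}$, treating $\beta$ as the small parameter (which is legitimate since $\beta = \aasy(e^{-S_0/\hhbar})$ by Lemma \ref{Lemma8}.i). First I would substitute $g_n = d_n^S + v_n$ into (\ref{F16}) with $\tilde\lambda = \mu^S$, subtract the zeroth-order relation $\mu^S d_n^S = f\Chi(n) d_n^S + \nu (d_n^S)^3$ (which holds trivially for $n \notin S$ as well), and expand the cube to obtain
\begin{equation*}
L_n v_n = -\beta(d_{n+1}^S + d_{n-1}^S) - \beta(v_{n+1} + v_{n-1}) + \nu\bigl(3 d_n^S v_n^2 + v_n^3\bigr),
\end{equation*}
with $L_n := \mu^S - f\Chi(n) - 3\nu (d_n^S)^2$.

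A preliminary step is to establish a uniform lower bound on $|L_n|$. For $n \in S$, the identity $\nu (d_n^S)^2 = \mu^S - f\Chi(n)$ gives $L_n = -2\nu (d_n^S)^2$; from (\ref{F23}) and (\ref{F15}), $(d_n^S)^2 = 1/\mathcal N + \asy(f/\nu) \ge 1/(2\mathcal N)$ for $\hhbar$ small, whence $|L_n| \ge \nu/\mathcal N$. For $n\notin S$ with $|n|\le N$, using $\Chi(n)=n$ and the explicit formula (\ref{F27}) one rewrites $L_n = (f/\mathcal N)(\nu/f - k_n)$ with $k_n := \mathcal N n - \sum_k(j+\ell_k) \in \Z$, so the hypothesis $\nu/f\notin \N$ yields $|L_n|\ge (f/\mathcal N)\,\mathrm{dist}(\nu/f,\N)>0$. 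For $|n|>N$, the compact support of $W$ forces $\tilde\Chi(n)\to 0$ exponentially, so $\Chi(n)\to -C_0/a$ and $|L_n|$ stays comparable to $\mu^S\sim \nu$. Altogether $\sup_n |L_n^{-1}| \le C/f$ uniformly.

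Next I would recast the equation as the fixed-point problem $\Vector v = T(\Vector v)$ with
\begin{equation*}
T(\Vector v)_n := L_n^{-1}\bigl[-\beta(d_{n+1}^S + d_{n-1}^S) - \beta(v_{n+1} + v_{n-1}) + \nu\bigl(3 d_n^S v_n^2 + v_n^3\bigr)\bigr],
\end{equation*}
and verify the contraction properties on $B_R = \{\Vector v\in \ell^1_\R : \|\Vector v\|_{\ell^1}\le R\}$ with $R = 4C\beta/f$. Using $\|\Vector v^2\|_{\ell^1}\le \|\Vector v\|_{\ell^1}^2$ and $\|\Vector v^3\|_{\ell^1}\le \|\Vector v\|_{\ell^1}^3$, the finiteness of $S$, and the bound on $|L_n^{-1}|$, one gets $\|T(\Vector 0)\|_{\ell^1}\le C\beta/f$ and $\mathrm{Lip}(T|_{B_R}) \le 2C\beta/f + (\nu/f)(C'R + C'R^2)$. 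Since $\beta/f = \aasy(e^{-S_0/\hhbar})$ (exponential $\beta$, polynomial $f\sim \hhbar^2$), for $\hhbar$ small enough both the Lipschitz constant is $<1/2$ and $T$ sends $B_R$ into itself. Banach's theorem then yields a unique $\Vector v \in B_R$ with $\|\Vector v\|_{\ell^1}\le R = \aasy(e^{-S_0/\hhbar})$, and $\Vector g^S := \Vector d^S + \Vector v$ is the desired solution. The normalization claim follows from
\begin{equation*}
\|\Vector g^S\|_{\ell^2}^2 = 1 + 2\langle \Vector d^S,\Vector v\rangle + \|\Vector v\|_{\ell^2}^2,
\end{equation*}
combined with $|\langle \Vector d^S,\Vector v\rangle|\le \|\Vector d^S\|_{\ell^\infty}\|\Vector v\|_{\ell^1}$ and $\|\Vector v\|_{\ell^2}\le \|\Vector v\|_{\ell^1}$.

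The main obstacle is the uniform lower bound on $|L_n|$: the cases $n\in S$ and $n\notin S$ with $|n|\le N$ are handled cleanly by (\ref{F20}), (\ref{F23}) and the hypothesis $\nu/f\notin \N$, but the transition region $|n|>N$ requires one to exploit the compact support of $W$ together with the exponential localization of the $\w_n$ stated in Lemma \ref{Lemma8} in order to rule out accidental zeros of $\mu^S - f\Chi(n)$. Everything else is routine contraction-mapping bookkeeping.
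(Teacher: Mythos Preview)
Your contraction-mapping argument is essentially correct and is equivalent to the paper's proof, which rescales by $(\tilde\lambda/\nu)^{1/2}$ and applies the Implicit Function Theorem in $\beta'=\beta/\tilde\lambda$; both routes reduce to the same invertibility check on the diagonal operator $L_n=\mu^S-f\Chi(n)-3\nu(d_n^S)^2$.

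Where you diverge from the paper is in the estimate of $\inf_n|L_n|$, and here you make your life harder than necessary. In the paper (Lemma \ref{Lemma4}) the argument is simply that $\Chi$ is \emph{bounded} (Lemma \ref{Lemma2}), $f\sim\hhbar^2$, and $\mu^S\sim\nu\sim\hhbar^{3/2}$, so $L_n/\mu^S\to -1$ for $n\notin S$ and $L_n/\mu^S=-2(d'_n)^2\sim -2/\mathcal N$ for $n\in S$. This gives $|L_n|\ge C\nu$ uniformly in $n$, with no case analysis on $|n|\le N$ versus $|n|>N$ and no appeal to $\nu/f\notin\N$. Your bound $|L_n|\ge (f/\mathcal N)\,\mathrm{dist}(\nu/f,\N)$ for $n\notin S$, $|n|\le N$ is valid but suboptimal: since $|k_n|\le 2\mathcal N N$ is bounded and $\nu/f\sim\hhbar^{-1/2}\to\infty$, in fact $|\nu/f-k_n|\sim\nu/f$ and hence $|L_n|\sim\nu$, not merely $\gtrsim f$. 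Likewise the ``transition region'' $|n|>N$ poses no obstacle at all --- boundedness of $\Chi$ is already enough, and no exponential localization of the $\w_n$ is needed here. With the sharper bound $\sup_n|L_n^{-1}|\le C/\nu$ your contraction estimates go through with room to spare and the concern you flag in the last paragraph disappears.

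In the paper the hypothesis $\nu/f\notin\N$ is invoked only to ensure that $\Vector{d}^S$ is isolated (i.e.\ away from a bifurcation), not to obtain the lower bound on $|L_n|$.
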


\begin {proof} First of all let us recall that from Remark \ref {Rem11} then $\mu^S \sim C \hhbar^{3/2} \not= 0$. \ In the following let us omit the 
upper letters $S$ in $\Vector {d}^S$ and $\Vector {g}^S$ for sake of simplicity. \ If we rescale $g_n = 
\left [ \frac {\tilde \lambda}{\nu} \right ]^{1/2} {g'}_n $ and $d_n = \left [ \frac {\tilde \lambda}{\nu} \right ]^{1/2} {d'}_n 
$, and if we set $\beta'= \beta /{\tilde \lambda}$ and $f'= f/{\tilde \lambda}$ then equations (\ref {F16}) and (\ref {F17}) 
take the form 
\be
\left ( 1 - {g'}_n^{2} \right ) {g'}_n = \beta' ( {g'}_{n+1} + {g'}_{n-1}  )+ f'   \Chi (n) {g'}_n \, . \label {F32}
\ee
with anticontinuous limit
\be
\left ( 1 - {d'}_n^{2} \right ) {d'}_n =  f'   \Chi (n) {d'}_n \, . \label {F33}
\ee
Therefore, any solution ${\tilde \lambda}$ and $\Vector {g}$ to (\ref {F16}) is associated to a solution $\Vector {g}'$ 
to (\ref {F32}), and all the solutions $\Vector {d}'$ to (\ref {F33}) associated to the values $\tilde \lambda = \mu^S$ given by Theorem \ref {Thm1} are isolated in 
$\ell^1$ by construction when there are no bifurcations, that is for $\frac {\nu}{f} \notin \N$.

Let ${\mathcal F}_1 : (-\delta , + \delta ) \times \ell^1_{\R} (\Z ) \to \ell^1_{\R} (\Z )$ be the map defined as 
\bee
\left ( {\mathcal F}_1 (\beta ' , {\Vector {g}'} ) \right )_n := - \left ( 1 - {g'}_n^{2} \right ) {g'}_n + 
\beta' ( {g'}_{n+1} + {g'}_{n-1}  )+ f'   \Chi (n) {g'}_n \, . 
\eee
We are going to look for solutions ${\Vector g}' (\beta ')$ to equation ${\mathcal F}_1 (\beta ' , {\Vector {g}'} ) =0$; where we 
already know that equation ${\mathcal F}_1 (0 , {\Vector {d}'} ) =0$ has solutions ${\Vector d}' $ associated to the ones given by Theorem \ref {Thm1}. \ We may extend 
the solutions to (\ref {F33}), obtained in the anticontinuous limit $\beta ' \to 0$, to the solutions to equation (\ref {F32}) 
for $\beta'$ small enough if the tridiagonal matrix 
\bee
T (\beta ') := \left ( D_{{\Vector g}' } {\mathcal F}_1 \right ) (\beta ', {\Vector g}' )= \mbox {tridiag} ( \beta' ,  f'   \Chi (n) -1 + 
3 {g'}_n^{2}  , \beta' ) \, , 
\eee
obtained deriving the previous equation by ${g'}_n$, is not singular at $\beta' =0$, where ${g'}_n$ takes the value of a solution ${d'}_n$ to (\ref {F23}) 
obtained for $\beta' =0$. \ The linearized map $\left ( D_{{\Vector g}' } {\mathcal F}_1 \right ) (0, {\Vector d}' )$ can be written as $\mbox {diag} (T_n)$, where 
\be
T_n =   \frac {f   \Chi (n) }{\mu^S} -1 + 3 {d'}_n^{2} \ \mbox { and } \ {d'}_n^{2} = 
\left \{
\begin {array}{ll}
 0 & \ \mbox { if } n \notin S \\ 
 \frac {\mu^S - f   \Chi (n) }{\mu^S} & \ \mbox { if } n \in S
\end {array}
\right. \, . \label {F34}
\ee
Hence
\bee
T_n = 
\left \{
\begin {array}{ll}
\frac {f  \Chi (n)-\mu^S}{\mu^S } & \ \mbox { if } n \notin S \\ 
 -2 \frac {f  \Chi (n)-\mu^S}{\mu^S } 
& \ \mbox { if } n \in S
\end {array}
\right. \, . 
\eee

\begin {lemma} \label {Lemma4}
Let $\hhbar$ be small enough, then it follows that 
\bee
\inf_{n\in \Z} |T_n| > \frac 12  \, .
\eee
\end {lemma}

\begin {proof} Assume at first that ${\mathcal N}=1$. \ In this case $S=\{ j \}$ for any $j \in \Z$ such that $|j|\le N$ and 
\bee
\mu^S =\nu + f  \Chi (j) = \nu + f j + f \frac {C_0}{a} + \aasy \left ( e^{-S_0/\hhbar } \right ) 
\eee
as $\hhbar$ goes to zero, because of Lemma \ref {Lemma2}. \ Hence
\bee
T_n = \frac {f [  \Chi (n) -  \Chi (j)] - \nu}{\mu^S} + \aasy \left ( e^{-S_0/\hhbar } \right )  \, .
\eee
Recalling now that $\nu = \eta C_1 \sim \hhbar^{3/2}$ and $f =Fa \sim \hhbar^2$ then $T_n \sim -1$. \ Similarly, 
we can easily extend such arguments to any integer number ${\mathcal N}>1$. \  In this case $S= \{ j , j+\ell_1 , \ldots , j +\ell_{{\mathcal N}-1} \}$, 
for $j \in \Z$ and $\ell_k \in \N$ such that $|j|$, $|j+\ell_k |<M$, $k =1,\ldots , {\mathcal N}-1$. \ Then 
\bee
\mu^{S} =\frac 1{\mathcal N} \nu + \frac f{\mathcal N} \sum_{\ell \in S} \ell + f \frac {C_0}{a} \sim \hhbar^{3/2} \, ; 
\eee
hence
\bee
T_n = \frac {f ({\mathcal N}   \Chi (n)- \sum_{\ell \in S} \ell - \frac {{\mathcal N}C_0}{a} ) - \nu}{{\mathcal N} \mu^S}  \sim -1 \, , \forall n  \, . 
\eee
The proof of the Lemma is so completed.

\end {proof}

Now we are ready to conclude the proof of Theorem \ref {Thm4}. \ Indeed, by Lemma \ref {Lemma4}, the linearized map 
$\left ( D_{\Vector {g}'} {\mathcal F}_1 \right )\left (0, \Vector {d}' \right )$ is invertible with inverse uniformly 
bounded. \ Therefore, by the Implicit Function Theorem, there exists a neighborhood ${\mathcal U}$ of $0$ such that if $\beta'$ 
belongs to such a neighborhood ${\mathcal U}$ then there exists a unique solution $\Vector {g}' := \Vector {g}' (\beta ')$ to equation 
${\mathcal F}_1 \left (\beta' , \Vector {g}' \right )=0$ in a $\ell^1$-neighborhood of $\Vector {d}' $, where $\Vector {d}' $ 
is an isolated solution to ${\mathcal F}_1 \left (0, \Vector {d}' \right )=0$ because bifurcations occur at $\frac {\nu}{f} \in \N$. \ Since $\beta ' = \aasy \left (e^{-S_0/\hhbar} \right )$, 
for any $\hhbar $ small enough, then $\beta' \in {\mathcal U}$ for any $\hhbar \in (0, \hhbar^\star )$ for some $\hhbar^\star >0$. \ From this fact and because 
the map $\beta' \to \Vector {g}' (\beta ')$ is $C^1$ then we can conclude that when $\hhbar$ is small enough then there exists a 
solution $\Vector {g}' \in \ell^1_\R$ to equation ${\mathcal F}_1 \left (\beta' , \Vector {g}' \right )=0$ such that 
\be
\| \Vector {g}' - \Vector {d}'  \|_{\ell^1}= \aasy \left (e^{-S_0/\hhbar} \right ) \, . \label {F35}
\ee
By construction it follows also that when $\tilde \lambda = \mu^S$ then 
\bee
\| \Vector {d} \|_{\ell^2} =1\, , \ \| \Vector {d'} \|_{\ell^2} =\left ( \frac {\nu}{\mu^S} \right )^{1/2} \ \mbox { and } \ \| \Vector {g}' \|_{\ell^2} 
= \left ( \frac {\nu}{\mu^S} \right )^{1/2 } \| \Vector {g} \|_{\ell^2}
\eee
hence
\bee
\left | \, \| \Vector {g} \|_{\ell^2} -1 \right | \le \| \Vector {g} - \Vector {d} \|_{\ell^2} \le \| \Vector {g} - \Vector {d} \|_{\ell^1} 
= \aasy \left (e^{-S_0/\hhbar} \right ) 
\eee 
from (\ref {F35}). \ Then the proof of the Theorem is given. 
\end {proof}

\begin {remark} \label {Rem14}
Since we can always normalize $\Vector {g}^S $ to $1$ by means of suitable rescaling of the nonlinearity parameter $\nu$ we can conclude 
that $\Vector {g}^S $ is a normalized solution to (\ref {F16}) associated to $\mu^S$ for some $ \tilde \nu = \nu + \aasy \left ( 
e^{-S_0/\hhbar } \right )$. \ Furthermore, by construction (see \S 3.2), the map $\nu \to \mu^S (\nu )$ is $C^1$ when 
we are far form the bifurcation points $\frac {\nu}{f} \in \N$; then we can conclude that for any $\nu$ fixed and such that $\frac {\nu}{f} \notin \N$ then 
equation (\ref {F16}) has a solution $\tilde \lambda$ and $\Vector {g}^S$ where $\Vector {g}^S$ is normalized and it satisfies (\ref {F35}) and $\tilde \lambda = \mu^S + \aasy 
\left ( e^{-S_0/\hhbar } \right )$. 
\end {remark}

\section {Fixed point argument} \label {Sec6}

Here, we go back to equation (\ref {F9}) and, at first, we justify the existence of $\psi_{\perp}$ by means of a fixed point argument. \ Recalling that 
$\lambda =\tilde  \lambda + (\Lambda_1 + F C_0 )$ and $\tilde \lambda = \mu^S + \aasy (e^{-S_0/\hhbar } )$ where $\mu^S \in [\nu - f N , \nu + f N ]$ 
and where $|C_0| \le C$, then the value of $\lambda$ corresponding to $\mu^S$ is such that $\lambda = \Lambda_1 + \nu + \asy ( \hhbar^2 )$. \ Hence, we consider the 
second equation of (\ref {F9}) for $\lambda$ in a neighborhood of $\Lambda_1$ with width of order $\hhbar^{3/2}$.

\begin{theorem} \label{Thm5} 
Let $\frac {\nu}{f} \notin \N$, $\psi = \psi_1 + \psi_\perp \in L^2$, where $\psi_1 = \Pi \psi = \sum_{n\in \Z} c_n \w_n$ and 
$\psi_\perp =\Pi_\perp \psi$, let $\hhbar >0$ small enough. \ Let $\delta_0 >0$ be any fixed real and positive number, then for any 
$\Vector{c}=(c_n)_{n\in \Z} \in \ell^1_{\R} (\Z) $, with $\|\Vector{c}\|_{\ell^1 (\Z)} \le \delta_0$, there exists a unique smooth map 
\bee
\hat{\psi}_\perp : \ell^1_{\R} (\Z) \to H^1(\R)
\eee
such that $\psi_{\perp}=\hat{\psi}_\perp (\Vector{c})$ is a solution to the second equation of (\ref {F9}) for small $\hhbar>0$. \ Moreover, 
$\hat{\psi}_\perp (\Vector{c})$ is small as $\hhbar \to 0$ in the sense that there exists a positive constant $C>0$, dependent on $\delta_0$ and independent 
of $\hhbar$, such that 
\be
\|\hat {\psi}_{\perp} (\Vector {c}) \|_{H^1} \le C  \hhbar^{ {1}/{2} } \, .  \label {F36}
\ee
\end{theorem}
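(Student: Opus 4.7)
The plan is a Banach fixed-point argument in $\Pi_\perp [H^1]$ exploiting the semiclassical spectral gap $E_2^b - \Lambda_1 \ge C \hhbar$ given by Remark \ref{Rem16}. Since $\Pi_\perp$ commutes with $H_B$ and annihilates $\psi_1$, the second equation of (\ref{F9}) rewrites as
\be
(H_B - \lambda ) \psi_\perp = - F \Pi_\perp \W \psi - \eta \Pi_\perp \psi^3 , \qquad \psi = \psi_1 + \psi_\perp . \label{FixedPoint}
\ee
The relevant values of $\lambda$ lie, as explained immediately before the statement of the theorem, in a neighborhood of $\Lambda_1 + \nu$ of width $\asy ( \hhbar^{3/2} )$. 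Combined with $E_2^b - \Lambda_1 \ge C \hhbar$, this yields invertibility of $(H_B - \lambda )|_{\Pi_\perp [L^2]}$ for $\hhbar$ small enough, with inverse $R ( \lambda )$ of operator norm $\asy ( \hhbar^{-1} )$ on $L^2$. Equation (\ref{FixedPoint}) then becomes the fixed-point problem $\psi_\perp = \Phi ( \Vector {c} , \psi_\perp )$, where
\bee
\Phi ( \Vector {c} , \psi_\perp ) := - R ( \lambda ) \Pi_\perp \left[ F \W \psi + \eta \psi^3 \right] .
\eee

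I would then show that, for a sufficiently large $K = K( \delta_0 )$ and $\hhbar$ small, the map $\Phi ( \Vector {c} , \cdot )$ is a contraction of the ball $B_K := \{ \psi_\perp \in \Pi_\perp [H^1] : \| \psi_\perp \|_{H^1} \le K \hhbar^{1/2} \}$. Boundedness of $\W$ gives $\| F \W \psi \|_{L^2} \le C \hhbar^2 ( \delta_0 + \| \psi_\perp \|_{H^1} )$; for the cubic term, Remark \ref{Rem3} with $p = 6$ gives $\| \psi_1^3 \|_{L^2} \le C \hhbar^{-1/2} \| \Vector {c} \|_{\ell^1}^3$, while the Sobolev embedding $H^1 \hookrightarrow L^6$ handles the $\psi_\perp$ contribution. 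Combining, the bracket in $\Phi$ has $L^2$-norm $\asy ( \hhbar^{3/2} )$ on $B_K$, so that $R ( \lambda )$ produces $\| \Phi \|_{L^2} = \asy ( \hhbar^{1/2} )$. The control of the derivative term follows from the energy identity
\bee
\hhbar^2 \| \partial_x \Phi \|_{L^2}^2 = \lambda \| \Phi \|_{L^2}^2 + \langle \Phi , g \rangle - \langle \Phi , V \Phi \rangle , \quad g := - F \Pi_\perp \W \psi - \eta \Pi_\perp \psi^3 ,
\eee
combined with the $L^2$ bounds just obtained and the uniform boundedness of $\lambda$ and $V$. For the Lipschitz estimate, one expands $( \psi_1 + \psi_\perp )^3 - ( \psi_1 + \tilde \psi_\perp )^3$ as a polynomial in $\psi_\perp - \tilde \psi_\perp$ with coefficients polynomial in $\psi_1 , \psi_\perp , \tilde \psi_\perp$ and reapplies the same estimates, obtaining a Lipschitz constant $\asy ( \hhbar^{1/2} ) < 1$.

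Smoothness of $\Vector {c} \mapsto \hat \psi_\perp ( \Vector {c} )$ then follows from the implicit function theorem applied to the smooth (polynomial) map $( \Vector {c} , \psi_\perp ) \mapsto \psi_\perp - \Phi ( \Vector {c} , \psi_\perp )$ from $\ell^1_{\R} (\Z) \times \Pi_\perp [H^1]$ into $\Pi_\perp [H^1]$, whose linearization $I - D_{\psi_\perp} \Phi$ is uniformly invertible by the same Lipschitz bound. The main obstacle is the interplay between the two semiclassical scales --- the spectral gap $E_2^b - \lambda \ge C \hhbar$ versus the $\asy ( \hhbar^{3/2} )$-width of the range where $\lambda$ varies --- which must be tracked carefully to justify $\| R ( \lambda ) \|_{L^2 \to L^2} = \asy ( \hhbar^{-1} )$ uniformly over the admissible $\lambda$, and to derive the $\asy ( \hhbar^{1/2} )$ bound in $H^1$ rather than the weaker one that the straightforward resolvent estimate would naively give.
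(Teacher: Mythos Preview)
Your overall strategy --- rewriting the second equation of (\ref{F9}) as a fixed-point equation via the resolvent on $\Pi_\perp[L^2]$ and running a Banach contraction in a small $H^1$-ball --- coincides with the paper's. The only cosmetic difference is that you invert $H_B - \lambda$ and keep the full $FW\psi$ as a source term, whereas the paper inverts $H_B + FW - (\Lambda_1 - \Gamma E)$ and keeps only $FW\psi_1$ on the right; since $FW = \asy(\hhbar^2)$ is a bounded perturbation, both choices work equally well, and your estimates on the source (Remark~\ref{Rem3} for $\|\psi_1^3\|_{L^2}$, Sobolev for $\psi_\perp$) mirror the paper's use of Gagliardo--Nirenberg.

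The genuine gap is the passage to the $H^1$ bound. Your energy identity, combined with $V \ge 0$ (Hyp.~1) so that $-\langle\Phi, V\Phi\rangle \le 0$, yields
\[
\hhbar^2\|\partial_x\Phi\|_{L^2}^2 \;\le\; \lambda\|\Phi\|_{L^2}^2 + |\langle\Phi, g\rangle| \;\le\; \asy(\hhbar)\cdot\asy(\hhbar) + \asy(\hhbar^{1/2})\cdot\asy(\hhbar^{3/2}) \;=\; \asy(\hhbar^2),
\]
hence $\|\partial_x\Phi\|_{L^2} = \asy(1)$ and therefore only $\|\Phi\|_{H^1} = \asy(1)$, not the $\asy(\hhbar^{1/2})$ claimed in (\ref{F36}). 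The factor $\hhbar^2$ in front of the kinetic term is the obstruction; the ``uniform boundedness of $\lambda$ and $V$'' you invoke cannot recover the missing half-power. The paper bypasses this step by asserting directly (in the spirit of \cite{FS2}) the mapping bound $\|[H_B + FW - \lambda]^{-1}\Pi_\perp\|_{\mathcal{L}(L^2 \to H^1)} \le C\hhbar^{-1}$, which applied to a source of size $\asy(\hhbar^{3/2})$ gives (\ref{F36}) in one stroke. Your energy identity is in effect a proof that this $L^2\to H^1$ resolvent norm is at most $C\hhbar^{-3/2}$; to reach the stated conclusion you must either import the sharper resolvent bound or supply an additional argument --- this is precisely the ``main obstacle'' you correctly flag at the end but leave unresolved.
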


\begin {proof} We make use here of same ideas already developed by \cite {FS2} adapted to the case of a tilted periodic potential. \ Let $\Lambda_1$ be defined as in 
\S \ref {A.2} and let $E\in \R$ be fixed. \ Note that the operator $H_B + F \W-(\Lambda_1- \Gamma E)$ on $\Pi_{\perp}L^2$ has inverse operator for $\hhbar$ sufficiently small 
provided that 
\bee
\Gamma = \asy (\hhbar^{3/2} )
\eee
and thanks to the fact that the $dist [\Lambda_1 , \sigma (\Pi_\perp H_B ) ] = O(\hhbar)$, that $\W$ is a bounded operator, and 
that $F= \asy (\hhbar^2 )$ from (\ref {F3}). \ Precisely, there exists a constant $C>0$ independent of $\hhbar$ such that  
\bee
\| [ H_B +F \W -(\Lambda_1-\Gamma E) ]^{-1} \Pi_{\perp}\|_{\mathcal{L}(L^2 \to H^1)} \le C \hhbar^{-1}\, . 
\eee
Then the second equation of (\ref {F9}) may be written as
\be
\psi_{\perp}={\mathcal F}_2(\psi_{\perp}), \label {F37}
\ee
where $\Vector {c} \in \ell^1_{\R} $ is fixed and where we set $\Lambda_1 = \lambda + \Gamma E$, $\psi=\psi_1 +\psi_{\perp}$ and 
\be
{\mathcal F}_2 (\psi_{\perp}) = \Pi_\perp \left [ H_B +F W -(\Lambda_1-\Gamma E)\right ]^{-1} \Pi_\perp \left \{ - \eta 
\psi^{3} - F \W \psi_1 \right \} \, . \label {F38}
\ee
We are going to show that ${\mathcal F}_2$ is a contraction map in 
\bee
K_\gamma = \{\psi_\perp \in H^1(\R) \cap \Pi_\perp L^2(\R) : \|\psi_\perp \|_{H^1} \le \gamma \}
\eee
for some $\gamma >0$. \ Indeed, let $\psi_\perp$, $\phi_\perp \in K_\gamma$ and let $\psi=\psi_1 +\psi_{\perp}$ and 
$\phi=\psi_1 +\phi_{\perp}$, we have 
\bee
\| {\mathcal F}_2 (\psi_{\perp}) \|_{H^1} 
&\le& C \left [ \frac{|\eta |}{\hhbar}  \| \psi^{3} \|_{L^2} + \frac {F}{\hhbar}  \| \W \psi_1 \|_{L^2} \right ] \\
&\le& \frac{|\eta | C}{\hhbar} \left ( \|\psi_1 \|_{L^{6}}^{3} +
\|\psi_{\perp}\|_{L^{6}}^{3} \right ) + C \hhbar \| \Vector {c} \|_{\ell^1}  
\eee
since Remark \ref {Rem3} and (\ref {F3}). \ Then, by the Gagliardo-Nirenberg inequality, it follows that 
\bee
\| \psi_\perp \|_{L^{6}}^{3} \le C \| \partial_x \psi_\perp \|_{L^2} \| \psi_\perp 
\|_{L^2}^{2} \le C \gamma^{3} 
\eee
since $\psi_\perp \in K_\gamma$, and because by Remark \ref {Rem3} and 
Lemma \ref {Lemma8}.vi 
\bee
\| \psi_1 \|_{L^{6}}^{3} \le C \hhbar^{-1/2} \| \Vector {c} \|_{\ell^1}^{3} 
\le C \hhbar^{-1/2} \delta_0^{3}
\eee
Hence
\bee
\| {\mathcal F}_2 (\psi_{\perp}) \|_{H^1} \le C_2 := C_2 (\hhbar ) = C \left [ |\eta | \hhbar^{-3/2} + |\eta | \hhbar^{-1} 
\gamma^{3}+ \hhbar \right ] = C \hhbar^{1/2} 
< \gamma 
\eee
for some $\hhbar $ small enough. \ Furthermore
\bee
{\mathcal F}_2(\psi_{\perp})-{\mathcal F}_2(\phi_{\perp}) = - \eta \Pi_\perp \left [ H_B +F W -(\Lambda_1-\Gamma E)\right ]^{-1} \Pi_\perp 
\left ( \psi^{3} -  \phi^{3} \right )  \, , 
\eee
hence
\bee
&& \|{\mathcal F}_2(\psi_{\perp})-{\mathcal F}_2(\phi_{\perp})\|_{H^1} 
\le \frac{|\eta | C }{\hhbar}  \left ( \|\psi\|_{L^{4}}^{2} + \| \phi \|_{L^{4}}^{2}\right ) 
\|\psi_{\perp} - \phi_{\perp} \|_{H^1} \\
&& \ \ \le  C |\eta | \hhbar^{-1} \left ( \| \psi_\perp \|_{L^{4}}^{2} + \| \phi_\perp \|_{L^{4}}^{2} + 
\| \psi_1 \|_{L^{4}}^{2} \right ) \|\psi_{\perp} - \phi_{\perp} \|_{H^1}\\ 
&& \ \ \le  C |\eta |\hhbar^{-1} \left ( \| \partial_x \psi_\perp \|_{L^2}^{{1}/{2}} \| \psi_\perp 
\|_{L^2}^{{3}/{2}}+ \| \partial_x \phi_\perp \|_{L^2}^{{1}/{2}} \| \phi_\perp \|_{L^2}^{{3}/{2}}+
\hhbar^{- {1}/{4}} \delta_0^{2} \right ) \|\psi_{\perp} - \phi_{\perp} \|_{H^1}\\
&& \ \ \le  C |\eta | \hhbar^{-1} \left ( \gamma^{{1}/{2}}+\hhbar^{-{1}/{4}} \delta_0^{2} \right ) 
\|\psi_{\perp} - \phi_{\perp} \|_{H^1}\\
&& \ \ \le C_3 \|\psi_{\perp} - \phi_{\perp} \|_{H^1}    
\eee
where 
\bee
C_3 := C_3 (\hhbar ) = C |\eta | \hhbar^{- {5}/{4}}  <1
\eee
since 
\bee
\eta = \asy (\hhbar^2) \, . 
\eee
Then there exists a unique solution $\hat {\psi}_{\perp} = \hat {\psi }_{\perp} (\Vector {c}  ) \in K_\gamma$ to equation (\ref {F37}) 
for small $\hhbar>0$. \ Moreover, by construction the solution $\hat \psi_\perp$ is given by 
\bee
\hat \psi_\perp = \sum_{j=1}^\infty (\psi_{\perp ,j} - \psi_{\perp ,{j-1}} ) + \psi_{\perp ,0}
\eee
where $\psi_{\perp ,j} = {\mathcal F}_2 (\psi_{\perp ,{j-1}})$ and $\psi_{\perp ,0} =0$. \ Hence 
\bee
\| \psi_{\perp ,j} - \psi_{\perp ,{j-1}} \|_{H^1} &=& \| {\mathcal F}_2 (\psi_{\perp ,{j-1}}) - {\mathcal F}_2 (\psi_{\perp ,{j-2}}) \|_{H^1} \\ 
& \le & C_3 \| \psi_{\perp ,{j-1}} - \psi_{\perp ,{j-2}} \|_{H^1} \\ 
& \le & C_3^{j-1} \| \psi_{\perp ,{1}} - \psi_{\perp ,{0}} \|_{H^1} \le C_2 C_3^{j-1} 
\eee
and thus
\bee
\| \hat \psi_\perp \|_{H^1} \le \sum_{j=1}^\infty C_2 C_3^{j-1} = 
\frac {C_2 }{1-C_3} \le C \hhbar^{1/2 }
 \eee
for some positive constant $C$. \ This fact completes the proof.
\end {proof}

We must underline that $\psi_1$ linearly depends on $\Vector {c}$ and thus the map $\Vector {c} \to \hat \psi_\perp (\Vector {c} )$ is a smooth 
map. \ In particular the following result holds true.

\begin {lemma} \label {Lemma5}
Let $\Vector {c} \in \ell^1_\R$ be such that $\| \Vector {c} \|_{\ell^1} \le \delta_0$, where $\delta_0$ is any fixed and positive 
number. \ Then for any $\Vector {q}$ such that $ \| \Vector {c} + \Vector {q} \|_{\ell_1} \le \delta_0$ then 
$\hat \psi_\perp ( \Vector {c} + \Vector {q} )$ there exists and it is such that
\bee
\| \hat \psi_\perp ( \Vector {c} + \Vector {q} ) - \hat \psi_\perp ( \Vector {c} ) \|_{H^1} \le C \hhbar \| \Vector {q} \|_{\ell^1} \, . 
\eee
\end {lemma}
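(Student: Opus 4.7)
The strategy is to exploit the fixed-point framework of Theorem \ref{Thm5} applied to both base vectors $\Vector{c}$ and $\Vector{c}+\Vector{q}$. Write $\hat\psi_\perp^{(1)} = \hat\psi_\perp(\Vector{c})$ and $\hat\psi_\perp^{(2)} = \hat\psi_\perp(\Vector{c}+\Vector{q})$; both exist by Theorem \ref{Thm5} since the two base vectors lie in the ball $\|\cdot\|_{\ell^1}\le \delta_0$, and both satisfy $\|\hat\psi_\perp^{(i)}\|_{H^1}\le C\hhbar^{1/2}$. Setting $\psi_1 = \sum_n c_n \w_n$, $q_1 = \sum_n q_n \w_n$, $\psi^{(1)} = \psi_1 + \hat\psi_\perp^{(1)}$, and $\psi^{(2)} = \psi_1 + q_1 + \hat\psi_\perp^{(2)}$, I observe that the linear resolvent $R := \Pi_\perp [H_B + F\W - (\Lambda_1 - \Gamma E)]^{-1}\Pi_\perp$ appearing in ${\mathcal F}_2$ does not depend on the base vector. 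Subtracting the two fixed-point equations $\hat\psi_\perp^{(i)} = {\mathcal F}_2(\hat\psi_\perp^{(i)})$ yields
\begin{equation*}
\hat\psi_\perp^{(1)} - \hat\psi_\perp^{(2)} = R\bigl[-\eta\bigl((\psi^{(1)})^3 - (\psi^{(2)})^3\bigr) + F\W q_1\bigr] \, ,
\end{equation*}
where the sign in the second term reflects $\psi_1^{(2)} - \psi_1^{(1)} = q_1$.

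The next step is to apply the operator bound $\|R\|_{L^2 \to H^1} \le C\hhbar^{-1}$ already established in the proof of Theorem \ref{Thm5} and estimate the two contributions separately. For the linear piece I use $F\|\W q_1\|_{L^2} \le F\|\W\|_\infty \|q_1\|_{L^2} \le C\hhbar^2 \|\Vector{q}\|_{\ell^1}$, via Remark \ref{Rem3} with $p=2$ and $F=\asy(\hhbar^2)$, contributing $C\hhbar\|\Vector{q}\|_{\ell^1}$ to $\|\hat\psi_\perp^{(1)}-\hat\psi_\perp^{(2)}\|_{H^1}$, exactly the required rate. For the nonlinear piece I factor
\begin{equation*}
(\psi^{(1)})^3 - (\psi^{(2)})^3 = \Delta\cdot\bigl[(\psi^{(1)})^2 + \psi^{(1)}\psi^{(2)} + (\psi^{(2)})^2\bigr] \, , \quad \Delta = (\hat\psi_\perp^{(1)} - \hat\psi_\perp^{(2)}) - q_1 \, ,
\end{equation*}
and treat the two parts of $\Delta$ separately. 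The part in $\hat\psi_\perp^{(1)} - \hat\psi_\perp^{(2)}$, combined with $L^p$ bounds on $\psi^{(i)}$ from Remark \ref{Rem3} and Lemma \ref{Lemma8}.vi together with the prefactor $|\eta|/\hhbar = \asy(\hhbar)$, produces a $\hhbar$-small prefactor times $\|\hat\psi_\perp^{(1)} - \hat\psi_\perp^{(2)}\|_{H^1}$, which I absorb into the LHS exactly as in the contraction argument of Theorem \ref{Thm5}. The part in $q_1$ must then be shown to contribute at most $C\hhbar\|\Vector{q}\|_{\ell^1}$, and closing the resulting implicit inequality $(1-\epsilon(\hhbar))\|\hat\psi_\perp^{(1)} - \hat\psi_\perp^{(2)}\|_{H^1} \le C\hhbar\|\Vector{q}\|_{\ell^1}$ will give the lemma.

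The main obstacle is the $q_1$-contribution of the cubic term. A naive H\"older split using $\|\psi^{(i)}\|_{L^\infty}^2 \le C\hhbar^{-1/2}$ (from Remark \ref{Rem3} and the one-dimensional embedding $H^1\hookrightarrow L^\infty$) together with $\|q_1\|_{L^2}\le C\|\Vector{q}\|_{\ell^1}$ only yields an $\asy(\hhbar^{1/2})\|\Vector{q}\|_{\ell^1}$ bound after multiplication by $|\eta|/\hhbar = \asy(\hhbar)$, which is weaker than the claimed rate. To recover the sharper $\asy(\hhbar)$ factor one has to decompose further $\psi^{(i)} = \psi_1^{(i)} + \hat\psi_\perp^{(i)}$ inside the quadratic bracket, exploiting in every mixed or purely $\hat\psi_\perp^{(i)}$ piece the extra $\asy(\hhbar^{1/2})$ smallness of $\hat\psi_\perp^{(i)}$ in $H^1$ provided by Theorem \ref{Thm5}, and for the pure $\psi_1$ terms choosing the H\"older triple $(p,p,p)=(6,6,6)$ or $(\infty,2,\infty)$ in combination with the almost-disjoint-support bounds on products $\w_n\w_m$ in Lemma \ref{Lemma8}.iv. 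Once every term of $\Delta\cdot[(\psi^{(1)})^2 + \psi^{(1)}\psi^{(2)} + (\psi^{(2)})^2]$ is bounded in this refined way, the implicit inequality closes at the rate $\asy(\hhbar)\|\Vector{q}\|_{\ell^1}$ asserted in the statement.
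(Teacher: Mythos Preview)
Your overall framework is exactly the paper's: subtract the two fixed-point equations for $\hat\psi_\perp$ at base points $\Vector{c}$ and $\Vector{c}+\Vector{q}$, apply the resolvent bound $\|R\|_{L^2\to H^1}\le C\hhbar^{-1}$, split off the linear term $F\W q_1$ (which already gives $C\hhbar\|\Vector{q}\|_{\ell^1}$), and absorb the part of the cubic difference proportional to $\hat\psi_\perp^{(1)}-\hat\psi_\perp^{(2)}$ back into the left-hand side. Up to that point you and the paper agree and everything is fine.

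Where you depart from the paper is precisely the point you flag as ``the main obstacle'': the $q_1$-contribution of the cubic term. The paper's displayed formula for $\mathcal{R}$ records only $3\psi_\perp^2\psi_q+3\psi_\perp\psi_q^2+\psi_q^3$ in the cubic part, i.e.\ it drops $\psi_1$ and $q_1$ from $(\psi_1+q_1+\psi_\perp+\psi_q)^3-(\psi_1+\psi_\perp)^3$; with that (incorrect) expression the $L^\infty$ bounds $\|\psi_\perp\|_{L^\infty}^2\le C\hhbar$ immediately give the coefficient $|\eta|\hhbar^{-1}\cdot\hhbar=\asy(\hhbar^2)$ in front of $\|\psi_q\|_{H^1}$ and the $q_1$ issue never arises. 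You, correctly, keep the full cubic, and then you see that the term $\eta\,\psi_1^2 q_1$ produces only $\asy(\hhbar^{1/2})\|\Vector{q}\|_{\ell^1}$.

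The gap is in your proposed remedy. Neither the H\"older triple $(6,6,6)$ nor the almost-disjoint-support bounds of Lemma~\ref{Lemma8}.iv can upgrade $\|\psi_1^2 q_1\|_{L^2}$ from $\asy(\hhbar^{-1/2})\|\Vector{q}\|_{\ell^1}$ to $\asy(1)\|\Vector{q}\|_{\ell^1}$. The $(6,6,6)$ split gives $\|\psi_1\|_{L^6}^2\|q_1\|_{L^6}\le C\hhbar^{-1/3}\cdot\hhbar^{-1/6}\|\Vector{q}\|_{\ell^1}=C\hhbar^{-1/2}\|\Vector{q}\|_{\ell^1}$, no gain. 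Lemma~\ref{Lemma8}.iv controls only off-diagonal products $\w_m\w_n$ with $m\ne n$; the diagonal contribution $c_n^2 q_n\,\w_n^3$ survives untouched and has $\|\w_n^3\|_{L^2}=\|\w_n\|_{L^6}^3\sim\hhbar^{-1/2}$ by Lemma~\ref{Lemma8}.vi. So with the tools available in the paper the honest rate for the cubic $q_1$-term is $|\eta|\hhbar^{-1}\cdot\hhbar^{-1/2}\|\Vector{q}\|_{\ell^1}=\asy(\hhbar^{1/2})\|\Vector{q}\|_{\ell^1}$, and your implicit inequality closes at $C\hhbar^{1/2}\|\Vector{q}\|_{\ell^1}$, not $C\hhbar\|\Vector{q}\|_{\ell^1}$. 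In short: you have correctly located an issue that the paper's computation of $\mathcal{R}$ glosses over, but your sketch of how to recover the extra $\hhbar^{1/2}$ does not work as written.
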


\begin {proof}
Indeed, equation (\ref {F37}) becomes 
\be
\psi_\perp + \psi_q = {\mathcal F}_2 \left ( \psi_\perp + \psi_q \right ) \label {F39}
\ee
where we set
\bee
\psi_\perp := \hat \psi_\perp (\Vector {c} ) \ \mbox { and } \ \psi_q := \hat \psi_\perp (\Vector {c} + \Vector {q})- \hat \psi_\perp (\Vector {c} ) \, . 
\eee
A straightforward computation gives that
\bee
{\mathcal F}_2 \left ( \psi_\perp + \psi_q \right ) = {\mathcal F}_2 \left ( \psi_\perp \right ) + {\mathcal R}
\eee
where
\bee
{\mathcal R} &=&  \Pi_\perp \left [ H_B +F W -(\Lambda_1-\Gamma E)\right ]^{-1} \Pi_\perp \times \\ 
&& \ \times \left \{ - \eta 
\left ( 3 \psi_\perp^2 \psi_q + 3 \psi_\perp \psi_q^2 + \psi_q^3 \right ) - F W \left ( \sum_{n\in \Z} q_n \w_n \right )  \right \} 
\eee
and (\ref {F39}) reduces to 
\be
\psi_q = {\mathcal R} \, . \label {F40}
\ee
The same arguments used in the proof of Theorem \ref {Thm5} yields to the following estimate
\bee
\| \psi_q \|_{H^1} &=& \| {\mathcal R}  \|_{H^1}  \\ 
&\le & C \left [ \frac {|\eta |}{\hhbar} \left ( \| \psi_\perp \|^2_{L^\infty}  + 
\| \psi_\perp \|_{L^\infty} \| \psi_q \|_{L^\infty} + \| \psi_q \|^2_{L^\infty} \right ) \| \psi_q \|_{L^2} + \frac {|F|}{\hhbar} \| \Vector {q} \|_{\ell^1} \right ] \\ 
& \le & C \hhbar^2 \| \psi_q \|_{H^1} + C \hhbar  \| \Vector {q} \|_{\ell^1}
\eee
because $\| \psi_\perp \|_{H^1}, \, \| \psi_q \|_{H^1} \le C \hhbar^{1/2}$ and (\ref {F3}). \ Then $\| \psi_q \|_{H^1} \le C \hhbar  \| \Vector {q} \|_{\ell^1}$ immediately follows.
\end {proof}

\begin {remark} \label {Rem15}
From Lemma \ref {Lemma5} it follows that the linear map $D_{\Vector {c}} (\hat \psi_\perp )  $ 
satisfies the estimate 
\bee
\left \| D_{\Vector {c}} (\hat \psi_\perp ) \right \|_{{\mathcal L}(\ell^1 \to H^1 )} \le C \hhbar \, .
\eee
\end {remark}

\section {Existence of stationary solutions} \label {Sec7}

\begin {theorem} \label {Thm6} Let $\frac {\nu}{f} \notin \N$ and let $\hhbar >0$ small enough. \ Let 
$\Vector {d}^S$ be a finite-mode normalized solution associated to a solution-set $S$ satisfying the assumption of Theorem \ref {Thm4}. \ 
Then there exists a stationary solution $\psi^S$ to equation (\ref {F12}) such that 
\bee
\left \| \psi^S - \sum_{n\in S} d_n^S \w_n \right \|_{H^1} \le C \hhbar^{1/4}  \, .  
\eee
\end {theorem}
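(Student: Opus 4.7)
The plan is to combine Theorems \ref{Thm4} and \ref{Thm5} by treating $\psi_\perp$ as a function of $\Vector{c}$ via the fixed point of \S \ref{Sec6} and then applying the Implicit Function Theorem around the DNLSWE solution $\Vector{g}^S$ produced by Theorem \ref{Thm4}. For any $\Vector{c}\in\ell^1_{\R}$ with $\|\Vector{c}\|_{\ell^1}\le\delta_0$, Theorem \ref{Thm5} provides a smooth map $\hat\psi_\perp(\Vector{c})\in\Pi_\perp H^1(\R)$ solving the second equation of (\ref{F12}), with $\|\hat\psi_\perp\|_{H^1}\le C\hhbar^{1/2}$ and, by Remark \ref{Rem15}, $\|D_{\Vector{c}}\hat\psi_\perp\|_{{\mathcal L}(\ell^1\to H^1)}\le C\hhbar$. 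Substituting $\psi=\sum_n c_n\w_n+\hat\psi_\perp(\Vector{c})$ into the first equation of (\ref{F12}) reduces the full system to a closed nonlinear equation on $\Vector{c}$,
\[
\tilde\lambda c_n = -\beta(c_{n+1}+c_{n-1}) + f\Chi(n) c_n + \nu c_n^3 + r^n(\Vector{c}),
\]
i.e.\ the DNLSWE (\ref{F16}) perturbed by the $\Vector{c}$-dependent source $\Vector{r}(\Vector{c})$ of (\ref{F13}) with $\psi_\perp=\hat\psi_\perp(\Vector{c})$.

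Using $\|\hat\psi_\perp\|_{H^1}\le C\hhbar^{1/2}$, $F\sim\hhbar^2$, $\eta\sim\hhbar^2$ in Lemmas \ref{Lemma1}--\ref{Lemma3}, the dominant contribution to $\|\Vector{r}(\Vector{c})\|_{\ell^1}$ is the cross term $\eta\|\Vector{c}\|^2_{\ell^1}\hhbar^{-1/4}\|\hat\psi_\perp\|_{H^1}$ of Lemma \ref{Lemma3}, giving $\|\Vector{r}(\Vector{c})\|_{\ell^1}\le C\hhbar^{9/4}$ uniformly for $\|\Vector{c}\|_{\ell^1}\le\delta_0$; the remaining terms are either exponentially small or of order $\hhbar^{5/2}$. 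A parallel computation relying on Lemma \ref{Lemma5} produces the Lipschitz estimate $\|\Vector{r}(\Vector{c})-\Vector{r}(\Vector{c}')\|_{\ell^1}\le C\hhbar\,\|\Vector{c}-\Vector{c}'\|_{\ell^1}$, which is more than enough.

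Now fix $\lambda=\mu^S+\Lambda_1+FC_0$ (up to the harmless exponentially small adjustment of $\nu$ of Remark \ref{Rem14}) and rerun the argument of Theorem \ref{Thm4}. After the rescaling $c_n=(\tilde\lambda/\nu)^{1/2}c'_n$, $\beta'=\beta/\tilde\lambda$, $f'=f/\tilde\lambda$, the linearization of the perturbed map at $\Vector{g}^S$ agrees, up to an $\aasy(e^{-S_0/\hhbar})$ term and an $O(\hhbar)$ correction coming from $D_{\Vector{c}}\hat\psi_\perp$ through $r^n$, with the diagonal operator $\mathrm{diag}(T_n)$ of Lemma \ref{Lemma4}; the assumption $\nu/f\notin\N$ keeps us away from bifurcations and yields uniform invertibility with inverse bounded by $2$. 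Since the inhomogeneity $\Vector{r}(\Vector{g}^S)/\tilde\lambda$ has size $\hhbar^{9/4}/\hhbar^{3/2}=\hhbar^{3/4}$, the resulting contraction furnishes a unique $\Vector{c}^S\in\ell^1_{\R}$ near $\Vector{g}^S$ with
\[
\|\Vector{c}^S-\Vector{g}^S\|_{\ell^1}\le C\hhbar^{3/4},
\]
and $\psi^S:=\sum_n c_n^S\w_n+\hat\psi_\perp(\Vector{c}^S)$ is the sought stationary solution of (\ref{F12}).

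To conclude, split
\[
\psi^S - \sum_{n\in S} d_n^S\w_n = \sum_n (c_n^S-d_n^S)\w_n + \hat\psi_\perp(\Vector{c}^S).
\]
Theorem \ref{Thm5} gives $\|\hat\psi_\perp(\Vector{c}^S)\|_{H^1}\le C\hhbar^{1/2}$, while Lemma \ref{Lemma8}.vi together with Remark \ref{Rem3} (applied also to $\partial_x\w_n$) yields $\|\sum_n a_n\w_n\|_{H^1}\le C\hhbar^{-1/2}\|\Vector{a}\|_{\ell^1}$, so that combining with $\|\Vector{c}^S-\Vector{g}^S\|_{\ell^1}\le C\hhbar^{3/4}$ and the exponentially small bound $\|\Vector{g}^S-\Vector{d}^S\|_{\ell^1}=\aasy(e^{-S_0/\hhbar})$ from Theorem \ref{Thm4} gives $\|\sum_n(c_n^S-d_n^S)\w_n\|_{H^1}\le C\hhbar^{1/4}$, whence the claim. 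The main obstacle is the invertibility step: because the bare linearization of the DNLSWE is invertible only after division by $\tilde\lambda\sim\hhbar^{3/2}$, one must carefully verify that the $O(\hhbar)$ correction from $D_{\Vector{c}}\hat\psi_\perp$ and every other $\hhbar$-dependent perturbation is genuinely subleading, and that the $\hhbar^{-1/2}$ amplification from $\|\w_n\|_{H^1}$ interlocks with the $\ell^1$-error $\hhbar^{3/4}$ to yield exactly the advertised $\hhbar^{1/4}$ rate in $H^1$.
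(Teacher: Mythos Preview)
Your overall strategy is exactly the paper's: eliminate $\psi_\perp$ via Theorem~\ref{Thm5}, reduce to the first line of (\ref{F12}) with $\Vector{r}=\Vector{r}(\Vector{c})$, and apply an implicit-function/contraction argument around the DNLSWE solution $\Vector{g}^S$ of Theorem~\ref{Thm4}; the final $H^1$ bookkeeping is identical. However, there is a genuine gap at precisely the point you flag as ``the main obstacle''.

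Your Lipschitz estimate $\|\Vector{r}(\Vector{c})-\Vector{r}(\Vector{c}')\|_{\ell^1}\le C\hhbar\,\|\Vector{c}-\Vector{c}'\|_{\ell^1}$ is too weak to close the invertibility step, and your subsequent claim that the correction to $\mathrm{diag}(T_n)$ is $O(\hhbar)$ is inconsistent with it. After the rescaling $c_n=(\tilde\lambda/\nu)^{1/2}c_n'$, the contribution of $D_{\Vector{c}}\Vector{r}$ to the linearization of the rescaled map is $\|D_{\Vector{c}}\Vector{r}\|_{{\mathcal L}(\ell^1\to\ell^1)}/\tilde\lambda$; since $\tilde\lambda\sim\hhbar^{3/2}$, a Lipschitz constant of order $\hhbar$ yields a perturbation of order $\hhbar^{-1/2}$, which swamps $\mathrm{diag}(T_n)$ (whose entries satisfy $|T_n|>1/2$ by Lemma~\ref{Lemma4}) and destroys the contraction. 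Invoking only Lemma~\ref{Lemma5} (i.e.\ $\|D_{\Vector{c}}\hat\psi_\perp\|\le C\hhbar$) is not enough: one must also track the prefactors $\eta\sim\hhbar^2$, $F\sim\hhbar^2$ and the explicit $\Vector{c}$-dependence of $r_4^n$. The paper carries this out in its Lemma~\ref{Lemma6}, obtaining the sharper bound
\[
\|D_{\Vector{c}}\Vector{r}(\Vector{c})\|_{{\mathcal L}(\ell^1\to\ell^1)}\le C\hhbar^{9/4}
\]
(the leading piece comes from $\eta\,D_{\Vector{c}}\Vector{r}_4$, of size $\hhbar^2\cdot\hhbar^{1/4}$). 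This gives a rescaled correction of order $\hhbar^{3/4}$, which \emph{is} subleading and is also what produces the $\ell^1$-error $\|\Vector{c}^S-\Vector{g}^S\|_{\ell^1}=O(\hhbar^{3/4})$ you quote. Replace your $C\hhbar$ by $C\hhbar^{9/4}$ (with the term-by-term justification) and the argument goes through unchanged.
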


\begin {proof}

Let us omit, for the sake of simplicity, the upper letter $S$. \ We have to consider the first equation of (\ref {F12}) where $\tilde \lambda$, $f$ and $\nu$ are 
defined by (\ref {F13Bis}):
\be
\tilde \lambda c_n = - \beta (c_{n+1} + c_{n-1}  ) + f   \Chi (n) c_n + \nu c_n^{3} + r^n \label {F41}
\ee
where $r_n$ is defined by (\ref {F13}) and where the map
\bee
 ( \Vector {c} , \psi_\perp ) \in \ell^1_{\R} \times H^1 \to \Vector {r} = \{ r^n \}_{n\in \Z} \in \ell^1_{\R} (\Z)
\eee
is norm bounded by (see Lemma \ref {Lemma1}, Lemma \ref {Lemma2}, equation (\ref {F15}), Lemma \ref {Lemma3} and Theorem \ref {Thm5})
\bee
\| \Vector {r} \|_{\ell^1} & \le & \| \Vector {r}_1 \|_{\ell^1} + |F| \, \left [ \| \Vector {r}_2 \|_{\ell^1} + 
\| \Vector {r}_3 \|_{\ell^1} \right ] + |\eta |\, \| \Vector {r}_4 \|_{\ell^1} + |F| \| \Vector {r}_5 \|_{\ell^1} \\ 
& \le & C e^{-(S_0 + \zeta )/\hhbar } \| \Vector {c} \|_{\ell^1} + C_\rho e^{-(S_0 - \rho )/\hhbar } \| \Vector {c} \|_{\ell^1} + 
C |F| \hhbar^{1/2}  + \\ 
&& \ + C_\rho \| \Vector {c} \|_{\ell^1}^{3} e^{-(S_0 - \rho )/\hhbar } + C |\eta | \hhbar^{1/4} \| \Vector {c} \|_{\ell^1}^2 + 
 C_\rho e^{-(S_0 - \rho )/\hhbar } \| \Vector {c} \|_{\ell^1}
\eee
for some $\zeta >0$ and for any $\rho \in (0, S_0)$, where $C$ is a positive constant and $C_\rho$ is a positive constant depending on $\rho$. 

\ Now, let us consider the 
following mapping
\bee
( \Vector {c} , \kappa ) \in \ell^1_{\R} (\Z ) \times \R \to {\mathcal G} ( \Vector {c} , \kappa ) = \left \{ {\mathcal G}_n ( \Vector {c} , \kappa ) \right \}_{n\in \Z} 
\in \ell^1_{\R } (\Z ) 
\eee
defined as 
\be
{\mathcal G}_n ( \Vector {c} , \kappa ) = \tilde \lambda c_n + \beta (c_{n+1} + c_{n-1}  ) - f   \Chi (n) c_n - \nu c_n^{3} - \kappa \hhbar^{-2} 
r^n \label {F42}
\ee
where $\Vector {r} := \Vector {r} (\Vector {c}) = \Vector {r} ( \Vector {c}, \psi_\perp )$ and where $\psi_\perp = \hat \psi_\perp \left (\Vector {c} \right )$ is the solution to 
the second equation of (\ref {F12}) for small $\hhbar >0$ given by Theorem \ref {Thm5}.

By construction, ${\mathcal G}_n ( \Vector {c} , 0 ) =0$ coincides with the discrete nonlinear Schr\"odinger equation DNLSWE (\ref {F14}), while 
${\mathcal G}_n ( \Vector {c} , \hhbar^2 ) =0$ coincides with equation (\ref {F41}).

\begin {lemma} \label {Lemma6}
$\mathcal{G}$ is a $C^1$ map in $(\Vector{c}, \kappa)$. \ In particular: 
 
\begin {itemize}

\item [i.] for any fixed $\rho \in (0, S_0)$ there exists 
 a positive constant $C:=C_{\rho } >0$ such that: he map $\Vector {r}_1 : \ell^1_\R \to \ell^1_\R$ satisfies
\be 
\|\left ( D_{\Vector{c}} \Vector{r}_1 \right ) (\Vector{c})\|_{{\mathcal L}(\ell^1 \to \ell^1)}
\le C (1+\|\Vector{c}\|_{\ell^1}^2 ) e^{-(S_0-\rho)/\hhbar}. \label {F43}
\ee

\item [ii.] the maps $\Vector {r}_2 : \ell^1_\R \to \ell^1_\R$ and $\Vector {r}_5 : \ell^1_\R \to \ell^1_\R$ are 
linear maps such that 
\be 
\|\left ( D_{\Vector{c}} \Vector{r}_j \right ) (\Vector{c})\|_{{\mathcal L}(\ell^1 \to \ell^1)}
= \aasy \left ( e^{-S_0/\hhbar} \right ) \, ,\ j=2,5 . \label {F44}
\ee

\item [iii.] the map $\Vector {r}_3 : H_1 \to \ell^1_\R$ does not directly depend on $\Vector {c}$ and it is such that 
\be 
\left \| \left ( D_{\Vector{c}} \Vector{r}_3 \right ) \left ( \hat \psi_\perp (\Vector {c} ) \right ) \right \|_{\ell^1} \le C \hhbar \label {F45}
\ee
for any ${\Vector{c}} \in \ell^1_\R$ such that $\| {\Vector{c}} \|_{\ell^1} \le \delta_0$.

\item [iv.] the map $\Vector {r}_4 : \ell^1_\R \times H^1 \to \ell^1_\R$ satisfies
\be 
\left \| \left ( D_{\Vector{c}} \Vector{r}_4 \right ) \left ( \Vector{c}, \left ( \hat \psi_\perp (\Vector {c} ) \right ) \right ) \right \|_{{\mathcal L}
(\ell^1 \to \ell^1)} \le   C \hhbar^{1/{4} } \| \Vector {c} \|_{\ell^1} \, . \label {F46}
\ee
\end {itemize}

In conclusion 
\bee
\|D_{\Vector{c}} \Vector{r} (\Vector{c})\|_{{\mathcal L}(\ell^1 \to \ell^1)}
\le C \left [ \hhbar^{{9}/{4}} \|\Vector{c}\|_{\ell^1} + \hhbar^3 + (1+\|\Vector{c}\|_{\ell^1}^2 ) 
e^{-(S_0-\rho)/\hhbar} \right ] .
\eee
\end {lemma}

\begin {proof}
Estimate (\ref {F43}) has been already proved (see estimate (37) by \cite {FS2}). \ Concerning $\Vector {r}_2$ 
we recall that it is the linear map defined in Lemma \ref {Lemma2}, hence $D_{\Vector {c}} \Vector {r}_2$ 
is independent of $\Vector {c}$ and such that (see Lemma \ref {Lemma8}.iv): $\| D_{\Vector {c}} \Vector {r}_2\|_{\ell^1} = \aasy 
\left ( e^{-S_0/\hhbar } \right )$, \ Similarly for $\Vector {r}_5$ as defined by (\ref {F13}). \ Concerning 
$\Vector {r}_3$ we recall that is is defined in Lemma \ref {Lemma2} and it does not directly depend on $\Vector {c}$, 
furthermore the estimate (\ref {F44}) on the $\ell^1$-norm comes from the fact that $\Vector {r}_3$ linearly depends 
of $\psi_\perp$ and from Lemma \ref {Lemma5}. \ Concerning the term $\Vector {r_4}$ it is defined as 
$r_4^n = \langle \w_n , \psi^3 \rangle - c_n^3 C_1$; then immediately follows that the map 
$\Vector {c} \to \Vector {r}_4 \left ( \Vector {c} , \hat \psi_\perp (\Vector {c} ) \right )$ is smooth. \ Furthermore, 
a straightforward calculation yields to the following expression
\bee
r_4^n &:=& r_4^n (\Vector {c}, \tilde \psi_1 , \psi_\perp ) = \langle u_n , \psi_\perp^3 \rangle + 
\langle u_n , \tilde \psi_1^3 \rangle + 3 \langle u_n , \psi_\perp (c_n u_n +\tilde \psi_1)^2 \rangle + \\ 
&& \ \ +3 \langle u_n , (c_n u_n + \tilde \psi_1 ) \psi_\perp^2 \rangle + 3 \langle u_n ,c_n^2 u_n^2 \tilde \psi_1 \rangle + 
3 \langle u_n , c_n u_n \tilde \psi_1^2 \rangle 
\eee
where we set $\tilde \psi_1 = \psi_1 -c_n u_n = \sum_{m\not= n} c_m u_m$. \ Since $u_n \tilde \psi_1 = \aasy (e^{-S_0/\hhbar })$ 
by Lemma \ref {Lemma8}.iv then the leading term in $r_4^n$ is given by 
\bee
r_4^n (\Vector {c}, 0 , \psi_\perp ) = \langle u_n , \psi_\perp^3 \rangle +  3 c_n^2 \langle u_n , \psi_\perp u_n^2 \rangle + 
3 c_n \langle u_n , u_n \psi_\perp^2 \rangle \, . 
\eee
From this fact and because $\| \psi_\perp \|_{H^1} \le C \hhbar^{1/2}$ (Theorem \ref {Thm5}), $\| D_{\Vector {c}} \hat \psi_\perp 
\|_{{\mathcal L}(\ell^1 \to H^1 )} \le C \hhbar $ (Remark \ref {Rem15}), $\| u_n \|_{L^\infty } \le C \hhbar^{-1/4} $ 
(Lemma \ref {Lemma8}.vi) and Lemma \ref {Lemma8}.v then it follows that the leading term in $D_{\Vector {c}} {\Vector {r}_4}$ 
is estimated by 
\bee
6 \| \Vector {c} \|_{\ell^1} \max_n \| u_n \|_{L^2}^2 \| \psi_\perp u_n \|_{L^\infty} \le C \hhbar^{1/4} \| \Vector {c} \|_{\ell^1} \, . 
\eee
By collecting all these facts and since (\ref {F3}) the the proof follows.
\end {proof}

Now, we fix $\delta_0 \ge 1$, then 
\be
\sup_{\| \Vector{c} \|_{\ell^1} \le \delta_0 } \| \Vector{r} \|_{\ell^1} \le C \hhbar^{9/4} \ \mbox { and } \  
\sup_{\| \Vector{c} \|_{\ell^1} \le \delta_0 } \| D_{\Vector {c}} \Vector{r} \|_{\ell^1} \le C \hhbar^{9/4} \, . \label {F47}
\ee

\begin {lemma} \label {Lemma7}
Let $\Vector{g}^S$ and $\tilde \lambda$ be a solution to equation ${\mathcal G} (\Vector{g}^S ,0)=0$, as given by 
Theorem \ref {Thm4}; the linear map $D_{\Vector {c}} {\mathcal G} (\Vector{g}^S ,0)$ is one-to-one and onto.
\end {lemma}

\begin {proof}
Again, let us omit the upper letter $S$ when this does not cause misunderstanding. \ By construction, the linear map 
\bee
D_{\Vector {c}} {\mathcal G} (\Vector{g}^S ,0): \ell^1_{\R} \to \ell^1_{\R} 
\eee
is associated to a tridiagonal matrix defined as 
\bee
\mbox {tridiag} \left ( \beta , \tilde \lambda  - f   \Chi (n) - 3 \nu g_n^{2 } , \beta \right ) 
\eee
Here, we make use of the result given in Appendix A by \cite {ABK}; in particular, because $\beta$ is exponentially small as $\hhbar$ goes to zero we only have 
to check that 
\be
\left | \tilde \lambda - f   \Chi (n) - 3 g_n^{2} \nu \right | \ge C \hhbar^{3/2} >0 \label {F48}
\ee
uniformly holds true with respect to $n$, where $\Vector {g}^S$ is close to $\Vector {d}^S$ and $\tilde \lambda$ is close to 
$\mu^S$. \ Indeed, the left hand side of (\ref {F48}) turns out to be close to $|\tilde \lambda T_n |$, where $T_n$ 
is given by (\ref {F34}) and, by Lemma \ref {Lemma4}, it is such that $|T_n| >\frac 12 $ for any $n$; furthermore $\tilde \lambda 
\sim \hhbar^{3/2}$. \ From this fact and from the argument given in Appendix A by \cite {ABK} then the linear map $D_{\Vector {c}} 
{\mathcal G} (\Vector{g}^S ,0)$ is one-to-one and onto.
\end {proof}

Now, we are ready to conclude the proof of Theorem \ref {Thm6}. \ Le $\Vector {g}^S$ be the solution to (\ref {F14}) associated 
to the finite-mode solution $\Vector {d}^S$ satisfying the assumptions of Theorem \ref {Thm6}. \ By the Implicit Function Theorem, there exist an $\hhbar$-independent $\delta >0$ such that if $|\kappa | \le \delta $ then there 
exists a  unique solution $\Vector {c}(\kappa )$ in a $\ell^1$-neighborhood of $\Vector {g}^S$ satisfying $\mathcal{G} (\Vector{c}, \kappa )=0$. \ Then we can 
conclude that there exists $\hhbar^\star >0$ such that for any $\hhbar < \hhbar^\star$ 
there is a unique solution $ \Vector {c}^S \in \ell^1_\R$ to $\mathcal{G} (\Vector{c}, \hhbar^2 )=0$, and it is such that
\be
\| \Vector {c}^S - \Vector {g}^S \|_{\ell^1} = \asy (\hhbar^{3/4} ) \label {F49}
\ee
since $|T_n| > C \hhbar^{3/2}$ and (\ref {F47}). \ Then the Theorem follows where $\psi^S = \psi_1^S + 
\psi_\perp^S$, $\psi_1^S = \sum_{n\in \Z} c_n^S u_n$ and $\psi_\perp^S = \hat \psi_\perp (\Vector{c}^S )$, furthermore 
\bee
\left \| \psi^S - \sum_n d_n^S \w_n \right \|_{H^1} &\le & \| \psi_\perp^S \|_{H^1} + \left \| \psi_1^S - \sum_n d_n^S \w_n \right \|_{H^1} \\ 
&\le &  \| \psi_\perp^S \|_{H^1} + \left \|  \sum_n (c_n^S -d_n^S) \w_n \right \|_{H^1} \\ 
&\le &  \| \psi_\perp^S \|_{H^1} + \left [ \left \| \Vector {c}^S - \Vector {g}^S \right \|_{\ell^1} + \left \| \Vector {g}^S - 
\Vector {d}^S \right \|_{\ell^1} \right ] \| \w_0 \|_{H^1} \\ 
& \le & C \hhbar^{1/2} + C \left [ \hhbar^{3/4} + \aasy \left ( e^{-S_0/\hhbar } \right ) \right ] \hhbar^{-1/2} \le  C \hhbar^{1/4} 
\eee
because of Theorems \ref {Thm4}, \ref {Thm5}, equation (\ref {F49}) and Lemma \ref {Lemma8}.vi. \ Theorem \ref {Thm6} is so proved.   
\end {proof}

\appendix

\section {Proof of Lemma \ref {Lemma3}}

Let us recall that $\psi =\psi_1 + \psi_\perp $ satisfies to the following estimates (see (28) and (29) in \cite {FS2})
\bee
\| \nabla \psi_1 \|_{L^2} \le C \hhbar^{-1/2} \| \Vector {c} \|_{\ell^1}
\eee
and
\bee
\| \psi_1 \|_{L^\infty} \le C \hhbar^{-1/4} \| \Vector {c} \|_{\ell^1} \, .
\eee
Moreover, from Lemma 1, the following inequalities hold true:
\bee
\| u_m u_n \|_{L^1} \le C e^{-[(S_0-\rho')|m-n|-\rho'']/\hhbar} \, , \ m\not= n 
\eee
\bee
\left \| \sum_n |u_n| \right \|_{L^\infty} \le C \hhbar^{-1/2}
\eee
\bee
\| u_n \|_{L^p} \le C \hhbar^{-(p-2)/4p} \ \forall p \in [2,\infty] \ \mbox { and } \ \| \nabla u_n \|_{L^2} \le C \hhbar^{-1/2} \, . 
\eee
Furthermore, we recall also the following Sobolev inequality (see Theorem 8.8 in Brezis)
\bee
\| u \|_{L^\infty (\R )} \le C \| u \|_{H^1 (\R )} \, , \ \forall u \in H^1 (\R ) \, .
\eee
Now, let 
\bee
r_4^n = \langle u_n , \psi^3 \rangle - C_1 c_n^3 = f_1^n + f_2^n 
\eee
where we set
\bee
f_1^n = \langle u_n , \psi_1^3 \rangle - C_1 c_n^3
\eee
and
\bee
f_2^n = \langle u_n , \psi^3 - \psi_1^3 \rangle \, . 
\eee
By the proof of Lemma 3 in \cite {FS2} we have that for any $\rho \in (0,S_0)$ there exists $C:=C_\rho >0$ such that the vector $\Vector {f}_1 = \{ f_1^n \}_{n\in \Z}$ 
can be estimated as follows 
\bee
\| \Vector {f}_1 \|_{\ell^1} = \sum_n |f_1^n| \le C \| \Vector {c} \|^3_{\ell^1} e^{-(S_0-\rho )/\hhbar }\, .
\eee
For what concerns the term $\Vector {f}_2 = \{ f_2^n \}_{n\in \Z}$ we observe that
\bee
f_2^n &=& \langle u_n , (\psi_1 + \psi_\perp )^3 - \psi_1^3 \rangle = \langle u_n , \psi_\perp^3 \rangle + 3 \langle u_n , \psi_\perp^2 \psi_1 \rangle + 
3 \langle u_n , \psi_\perp \psi_1^2 \rangle
\eee
where
\bee
\sum_n \left | \langle u_n , \psi_\perp^3 \rangle \right | &\le &  \left \langle \sum_n| u_n |, |\psi_\perp|^3 \right \rangle \\
&\le & \left \| \sum_n| u_n | \right \|_{L^\infty} \| \psi_\perp \|_{L^2}^2  \| \psi_\perp \|_{L^\infty} \le C \hhbar^{-1/2} \| \psi_\perp \|_{H_1}^3
\eee
\bee
\sum_n \left | \langle u_n , \psi_\perp^2 \psi_1 \rangle \right | &\le & \sum_{n,m} |c_m| \, \left | \langle u_n, u_m \psi_\perp^2 \rangle \right | \\
&\le & \sum_{n,m} |c_m| \, \| u_n u_m \|_{L^1} \| \psi_\perp \|_{L^\infty}^2 \\ 
&\le & \sum_{n,m} |c_m| \, \| u_n u_m \|_{L^1} \| \psi_\perp \|_{H^1}^2 \\
&\le & \sum_{m} |c_m|  \left [ \sum_n \| u_n u_m \|_{L^1} \right ] \| \psi_\perp \|_{H^1}^2 \\
&\le & C \| \Vector {c} \|_{\ell^1} \| \psi_\perp \|_{H^1}^2
\eee
\bee
\sum_n \left | \langle u_n , \psi_\perp \psi_1^2 \rangle \right | 
&\le & \sum_{n,m,\ell } |c_m| \, |c_\ell | \, \left | \langle u_n u_m u_\ell , \psi_\perp \rangle \right | \\ 
&\le & \sum_n |c_n|^2 \langle |u_n|^3 , |\psi_\perp | \rangle + 2 \sum_n |c_n| \sum_{\ell \not= n} |c_\ell | \langle |u_\ell|^2 |u_n | , |\psi_\perp |\rangle + \\ 
&& \ \ + \sum_n \sum_{m,\ell \not= n} |c_m| \, |c_\ell | \langle |u_m|\, |u_n|\, |u_\ell |, |\psi_\perp |\rangle \\ 
&\le & \max_n \left [ \| u_n \|_{L^2}^2 \| u_n \|_{L^\infty } \right ] \sum_n |c_n|^2 \| \psi_\perp \|_{L^\infty} + \\ 
&& \ \ + 2 \sum_n |c_n| \sum_{\ell \not= n } |c_\ell | \, \| u_n u_\ell \|_{L^1} \| u_\ell \|_{L^\infty} \| \psi_\perp \|_{L^\infty} + \\ 
&& \ \ + \sum_n \sum_{m,\ell \not= n} |c_m|\, |c_\ell | \| u_n u_m \|_{L^1} \| u_\ell \|_{L^\infty} \| \psi_\perp \|_{L^\infty} \\ 
&\le & \| \Vector {c} \|_{\ell^1}^2 \hhbar^{-1/4} \| \psi_\perp \|_{H^1} + 2 \| \Vector {c} \|_{\ell^1}^2 e^{-(S_0 - \rho )/\hhbar} \hhbar^{-1/4} \| \psi_\perp \|_{H^1} + \\
&& \ \ + \| \Vector {c} \|_{\ell^1}^2 \hhbar^{-1/4} e^{-(S_0 - \rho )/\hhbar}  \| \psi_\perp \|_{H^1} \\ 
&\le & C \| \Vector {c} \|_{\ell^1}^2 \hhbar^{-1/4}  \| \psi_\perp \|_{H^1}
\eee
Therefore
\bee
\| \Vector {f_2} \|_{\ell^1} \le C \left [ \hhbar^{-1/2} \| \psi_\perp \|_{H_1}^3 + \| \Vector {c} \|_{\ell^1} \| \psi_\perp \|_{H_1}^2 +  \| \Vector {c} \|_{\ell^1}^2 
\hhbar^{-1/4}  \| \psi_\perp \|_{H^1} \right ] \, . 
\eee
Hence,
\bee
\| \Vector {r_4} \|_{\ell^1} &\le & \| \Vector {f_1} \|_{\ell^1} + \| \Vector {f_2} \|_{\ell^1} \\ 
&\le & C \left [ \hhbar^{-1/2} \| \psi_\perp \|_{H_1}^3 + \| \Vector {c} \|_{\ell^1} \| \psi_\perp \|_{H_1}^2 +  \| \Vector {c} \|_{\ell^1}^2 
\hhbar^{-1/4}  \| \psi_\perp \|_{H^1} + \| \Vector {c} \|^3_{\ell^1} e^{-(S_0-\rho )/\hhbar } \right ] \, . 
\eee

\end {document}